\documentclass[11pt,a4paper]{paper}

\usepackage{graphicx}
\usepackage{enumerate}
\usepackage{amsmath,amsthm,amsfonts}
\usepackage{microtype}
\usepackage{subcaption}
\usepackage{wrapfig}
\usepackage{fullpage}

\graphicspath{{./pictures/}}

\usepackage[usenames,dvipsnames]{xcolor}

\usepackage{xspace}

\usepackage{cite}

\newtheorem{theorem}{Theorem}[section]
\newtheorem{claim}[theorem]{Claim}

\newtheorem{lemma}[theorem]{Lemma}

\newcommand{\mathset}[1]{\ensuremath {\mathbb {#1}}}

\newcommand{\eps}{\ensuremath {\varepsilon}}

\setlength{\fboxsep}{.5pt}

\newcommand{\R}{\mathset{R}}

\newcommand\A{{\mathcal A}}
\newcommand\B{{\mathcal B}}
\newcommand\C{{\mathcal C}}
\newcommand\D{{\mathcal D}}
\newcommand\E{{\mathcal E}}
\renewcommand\S{{\mathcal S}}

\DeclareMathOperator{\lr}{lr}
\DeclareMathOperator{\rr}{rr}

\DeclareMathOperator{\CD}{\mathcal{D}}
\DeclareMathOperator{\Q}{\mathcal{Q}}

\usepackage{ifdraft}

\title{Reachability Oracles for Directed
  Transmission Graphs\footnote{ 
  This work is supported in part by GIF
  projects 1161 and 1367, DFG project MU/3501/1 and 
  ERC StG 757609.
  A preliminary version appeared as
  Haim Kaplan, Wolfgang Mulzer, Liam Roditty, and Paul Seiferth.
  \emph{Spanners and Reachability Oracles for Directed Transmission Graphs.}
  Proc.\@ 31st SoCG, pp.~156--170.}}

\author{Haim Kaplan\thanks{School of Computer Science, Tel Aviv University,
 Israel, \texttt{haimk@post.tau.ac.il}} \and
 Wolfgang Mulzer\thanks{Institut f\"ur Informatik,
Freie Universit\"at Berlin,
  Germany,
  \texttt{[mulzer,pseiferth]@inf.fu-berlin.de}} \and
Liam Roditty\thanks{Department of Computer Science, Bar Ilan University,
  Israel
  \texttt{liamr@macs.biu.ac.il}} \and
Paul Seiferth\footnotemark[3]}
\begin{document}
\maketitle

\begin{abstract}
Let $P \subset \R^d$ be a set of $n$ points
in $d$ dimensions
such that each point $p \in P$ has an
\emph{associated radius} $r_p > 0$.
The \emph{transmission graph} $G$ for $P$ is the
directed graph with vertex set
$P$ such that there is
an edge from $p$ to $q$ if and only if
$|pq| \leq r_p$, for any $p, q \in P$.

A \emph{reachability oracle} is a data structure
that decides for any two vertices $p, q \in G$
whether $G$ has a path from $p$ to $q$.
The quality of the oracle is measured
by the space requirement $S(n)$,
the query time $Q(n)$, and the preprocessing time.
For transmission graphs of one-dimensional point sets, we 
can construct in $O(n \log n)$ time
an oracle with $Q(n) = O(1)$ and $S(n) = O(n)$.
For planar point sets, the ratio $\Psi$ between the largest and
the smallest associated radius turns out to be an important
parameter. We present three data
structures whose quality depends on $\Psi$:
the first works only for $\Psi < \sqrt{3}$ and achieves $Q(n) = O(1)$ with
$S(n) = O(n)$ and preprocessing time $O(n\log n)$;
the second data structure gives
$Q(n) = O(\Psi^3 \sqrt{n})$
 and $S(n) = O(\Psi^3 n^{3/2})$;
the third data structure is randomized with
$Q(n) = O(n^{2/3}\log^{1/3} \Psi \log^{2/3} n)$ and
$S(n) = O(n^{5/3}\log^{1/3} \Psi \log^{2/3} n)$
and answers queries correctly with high probability.
\end{abstract}

\section{Introduction}

Representing the connectivity of a graph in a space efficient, 
succinct manner, while supporting fast queries, is
one of the most fundamental data structure questions on graphs.
For an undirected graph,
it suffices to compute the connected components and
to store with each vertex a label for the respective component.
This leads to a linear-space data
structure that can  decide in constant time if any two given vertices are
connected. For directed graphs, however, connectivity is not a 
symmetric relation any more,
and the problem turns out to be much more challenging.
Thus, if $G$ is a directed graph, we say that a vertex $s$ can 
\emph{reach} a vertex $t$ if there is a directed path in $G$
from $s$ to $t$. Our goal is to construct a \emph{reachability oracle},
a space efficient data structure  that answers
\emph{reachability queries}, i.e., that determines for any pair of
query vertices $s$ and $t$ whether $s$ can reach $t$. The quality of a
reachability oracle for a graph with $n$ vertices is measured by three
parameters: the \emph{space} $S(n)$, the \emph{query time} $Q(n)$ and the
\emph{preprocessing time}.  The simplest solution stores for
each pair of vertices whether they can reach each other, leading to a
reachability oracle with $\Theta(n^2)$ space and constant query time.
For sparse graphs with $O(n)$ edges, storing just the graph and 
performing a breadth first search for a query yields an $O(n)$ space 
oracle with $O(n)$ query time.
Interestingly, other than that, we are not aware of any better solutions
for general directed graphs,
even sparse ones; see Cohen et al.~\cite{CohenHaKaZw03} for 
partial results. Thus, any result that
simultaneously achieves subquadratic space and sublinear query time would be of
great interest. A lower bound by P\v{a}tra\c{s}cu~\cite{Patrascu11} shows
that we cannot hope for $o(\log n)$ query time with $O(n)$ space in sparse 
graphs,
but it does not rule out constant time queries with slightly superlinear space.
In the absence of progress towards
non-trivial reachability oracles or better lower bounds, solutions for special
cases become important.
For directed planar graphs, after a long line of 
research~\cite{ArikatiEtAl96,Frederickson87,Djidjev96,ChenXu00,Thorup04},
Holm, Rotenberg and Thorup presented a reachability oracle with 
constant query time and $O(n)$ preprocessing time and space 
usage~\cite{Holm2015}. This data structure, as well as most other 
previous  reachability oracles,  can also return the approximate 
shortest path distance between the query vertices.

\emph{Transmission graphs} constitute a graph class that shares many
similarities with planar graphs:
let $P \subset
\R^2$ be a set of points where each point $p \in P$ has a (transmission) radius
$r_p$ associated
with it. The transmission graph has vertex set $P$ and a
\emph{directed} edge between two distinct points $p,q \in P$ if and only if
$|pq| \leq r_p$,
where $|pq|$ denotes the Euclidean distance between $p$ and $q$. 
Transmission graphs are a common model for directed sensor 
networks~\cite{KaplanEtAl15,PelegRoditty10,RickenbachEtAl09}.
In this geometric context, it is natural to consider a more general
type of query where the target point is an arbitrary point in the plane 
rather than a vertex of the graph. In this case, a vertex $s \in P$ can 
reach a \emph{point} $q \in \R^2$ if there is a \emph{vertex} $t \in P$ 
such that $s$ reaches $t$ and such that $|tq| \leq r_t$. We call such 
queries \emph{geometric reachability queries} and we call oracles that 
can answer such queries \emph{geometric reachability oracles}. To avoid
ambiguities, we sometimes use the term \emph{standard} reachability 
query/oracle when referring to the case where the query consists of 
two vertices.

\paragraph*{Our Results.}
An extended abstract of this work was presented
at the 31st International Symposium on Computational
Geometry~\cite{KaplanMuRoSe15}. 
This abstract also discusses the problem of constructing
sparse \emph{spanners} for transmission graphs. While we were preparing the
journal version, it turned out that a full description of 
our results would yield a large and unwieldy manuscript. Therefore,
we decided to split our study on transmission graphs into two parts,
the present paper that deals with the construction of efficient 
reachability oracles, and a companion paper that studies fast algorithms
for spanners in transmission graphs~\cite{KaplanEtAl15}.

In Section~\ref{sec:1d} we will see that one-dimensional 
transmission graphs
admit a rich structure that
can be exploited to construct a simple linear space geometric reachability
oracle with constant query time, and  $O(n \log n)$ preprocessing time.

In two dimensions, the situation is more involved.
Here, it turns out that the \emph{radius ratio} $\Psi$, the ratio of the
largest and the smallest transmission radius of a point in $P$, is
an important parameter.
We consider first the case where $\Psi < \sqrt{3}$. In this case,
the transmission graph has a lot
of structure: from the presence of two crossing edges $pq$ and
$rs$, we can conclude that additional edges between
$p$, $q$, $r$, and $s$ must be present.
Using this structural information, we can turn the
transmission graph into a planar graph in $O(n\log n)$ time,
while preserving the reachability relation
and keeping the number of vertices linear in $n$.
As mentioned above, for planar graphs there is a linear time
construction of a reachability oracle with linear space, and
constant query time~\cite{Holm2015}. Thus, our transformation together with
this construction yields a standard
reachability oracle with linear space, constant query time 
and $O(n\log n)$ preprocessing time. Furthermore, in the companion paper
we show that any standard reachability oracle can be
transformed into a  geometric one by paying an \emph{additive}
overhead of $O(\log n \log \Psi)$ to the query time and of $O(n \log \Psi)$
to the space~\cite{KaplanEtAl15}. We apply this transformation to the reachability oracle that we get
by planarizing the transmission graph and get a geometric oracle that requires  $O(n)$
space, $O(n\log n)$ preprocessing time, and  answers geometric queries in $O(\log n)$ time  and standard queries in $O(1)$ time.
 Section~\ref{sec:psisqrt3} presents this result.

When $\Psi \ge \sqrt{3}$, we do not know how to obtain a planar graph representing
the reachability relation of $G$.
Fortunately, we can use a theorem by Alber and Fiala that allows us
to find a small and balanced separator with respect to the area of the union
of the disks~\cite{AlberFiala04}. This leads to
a standard reachability oracle with query time $O(\Psi^3\sqrt{n})$
and space and preprocessing time $O(\Psi^3 n^{3/2})$,
see Section~\ref{sec:psiconst}.
When $\Psi$ is even larger, we
can use random sampling combined with a quadtree of logarithmic
depth
to obtain a standard reachability oracle with query time $O(n^{2/3}\log^{1/3} \Psi \log^{2/3} n)$,
space $O(n^{5/3}\log^{1/3} \Psi \log^{2/3} n)$, and
preprocessing time $O(n^{5/3}(\log \Psi + \log n) \log^{1/3} \Psi \log^{2/3} n)$. Refer to
Section~\ref{sec:psipoly}.
Again, we can transform both oracles
into  geometric reachability oracles using the result from the
companion paper~\cite{KaplanEtAl15}. Since the overhead is
additive, the transformation does not affect the performance bounds.

\section{Preliminaries and Notation}
\label{sec:prelims}

Unless stated otherwise, we let $P \subset \R^2$ denote a set of
$n$ points in the plane,
and we assume that for each point $p$, we have an
\emph{associated radius} $r_p > 0$.
Furthermore, we assume that the input is scaled so that the smallest
associated radius is $1$.
The elements in $P$ are called \emph{vertices}.
The \emph{radius ratio} $\Psi$ of $P$ is defined as
$\Psi = \max_{p \in P} r_p$ (the 
smallest radius is $1$).
Given a point $p \in \R^2$ and a radius $r$, we denote by $D(p, r)$
the closed disk with center $p$ and radius $r$. If $p \in P$, we
use $D(p)$ as a shorthand for $D(p, r_p)$. We write
$C(p, r)$ for the boundary circle of $D(p, r)$.

Our constructions for the two-dimensional reachability oracles make extensive
use of planar grids. For $i \in \{0, 1, \dots\}$,
we denote by  $\Q_i$ the \emph{grid at level $i$}. It consists of
axis-parallel squares with diameter $2^i$ that partition the
plane in grid-like fashion (the \emph{cells}).
Each grid $\Q_i$ is aligned so that the origin lies at the corner of a cell.
We assume that our model of computation allows to
 find
the grid cell containing a given point
  in constant time.

In the one-dimensional case, our construction immediately yields a geometric
reachability oracle. In the two-dimensional case, we are only able to
construct standard reachability oracles directly. However, we can use the following
result from our companion paper to transform these oracles into geometric reachability
oracles in a black-box fashion~\cite{KaplanEtAl15}.

\begin{theorem}[Theorem~4.3 in \cite{KaplanEtAl15}]
\label{thm:geometricreachability}
Let $G$ be the transmission graph for a set $P$ of $n$ points in the plane with radius
ratio $\Psi$. Given a reachability oracle for $G$ that uses $S(n)$ space and
has query time $Q(n)$, we can compute in $O(n \log n \log \Psi)$ time a
geometric reachability oracle with  $S(n) + O(n \log \Psi)$ space and
query time $O(Q(n) + \log n \log \Psi)$.
\end{theorem}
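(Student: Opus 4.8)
The plan is to reduce a geometric query to reachability queries together with some purely geometric bookkeeping. Recall the defining property: $s$ reaches a point $q \in \R^2$ iff there is a vertex $t$ with $s \to^* t$ and $q \in D(t)$; equivalently, $q$ lies in the union $\bigcup_{t \colon s \to^* t} D(t)$. Thus a geometric query $(s,q)$ amounts to deciding whether \emph{some} vertex whose disk contains $q$ is reachable from $s$ in $G$. The whole difficulty is that the set $T_q = \{\, t \in P : q \in D(t)\,\}$ of \emph{covering} vertices may be large, so we cannot afford one standard query per element of $T_q$.

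To control $T_q$ I would first bucket the vertices by radius into the $O(\log\Psi)$ \emph{levels} $L_j = \{\, p : 2^j \le r_p < 2^{j+1}\,\}$ (radii are scaled so that the minimum is $1$). For level $j$ I overlay the grid $\Q_j$, whose cells have diameter $2^j$. The key structural observation is that any two vertices $p,p'$ lying in the same cell of $\Q_j$ and both in $L_j$ satisfy $|pp'| \le 2^j \le r_p$, so $G$ contains both edges $p \to p'$ and $p' \to p$; hence the level-$j$ vertices of a single cell form a \emph{clique} and are mutually reachable. Consequently each such group is either entirely reachable from $s$ or entirely unreachable, and it suffices to store one \emph{representative} per nonempty (level, cell) pair. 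There are at most $n$ representatives, they can be computed in $O(n\log n\log\Psi)$ time, and together with the grids they occupy $O(n\log\Psi)$ space.

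Next I would handle the geometry of a query point $q$. A level-$j$ vertex $t$ with $q \in D(t)$ has $|tq| < 2^{j+1}$, so its cell is one of the $O(1)$ cells of $\Q_j$ within distance $2^{j+1}$ of the cell containing $q$. Scanning the $O(\log\Psi)$ levels therefore yields only $O(\log\Psi)$ candidate cells. For each candidate cell I must still test whether $q$ is \emph{actually} covered by one of its level-$j$ disks; precomputing, for every cell, a point-location structure on the union of its (roughly congruent) disks lets this be done in $O(\log n)$ time per cell, for $O(\log n\log\Psi)$ overall and $O(n\log\Psi)$ extra space. After this step the answer is ``yes'' iff the representative of some surviving candidate cell is reachable from $s$.

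The main obstacle is the last step: turning the $O(\log\Psi)$ surviving representatives into the claimed \emph{additive} cost, i.e.\ a \emph{single} invocation of the given oracle plus $O(\log n\log\Psi)$ extra work, rather than one oracle call per candidate. The route I would take is to prove a structural lemma letting a single canonical reachability query stand in for the whole candidate set. Concretely, I would designate, as a geometric function of both $s$ and $q$, one target $w = w(s,q)$ among the candidates and show that $q$ is geometrically reachable from $s$ exactly when $s \to^* w$. The natural choice is the surviving representative of largest radius, and the hope is to exploit the monotone nature of transmission graphs---a vertex whose radius is large enough to reach a far covering disk also reaches the nearer covering disks---to argue that the candidates can be linearly ordered by reachability, so that reaching any of them is equivalent to reaching a single extremal one. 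Establishing this ordering, and ruling out the incomparable configurations that would otherwise force several oracle calls, is the delicate part; it is exactly where the additive bound $Q(n)+O(\log n\log\Psi)$, as opposed to a multiplicative $O(Q(n)\log\Psi)$, is won. Everything else---the level/clique reduction, the candidate enumeration, and the coverage tests---is routine bookkeeping within the stated $O(n\log\Psi)$ space and $O(n\log n\log\Psi)$ preprocessing budgets.
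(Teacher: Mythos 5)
First, a point of calibration: this paper never proves Theorem~\ref{thm:geometricreachability} --- it is imported as a black box (Theorem~4.3 of \cite{KaplanEtAl15}), so there is no in-paper proof to compare against, and your proposal must stand on its own. Its first three steps do stand: bucketing by radius into $O(\log\Psi)$ levels, the clique property of same-level vertices sharing a cell of $\Q_j$, one representative per nonempty (level, cell) pair, the $O(1)$ candidate cells per level around $q$, and the $O(\log n)$-time coverage test by point location in the union of each clique's disks. This is exactly the grid/clique machinery the present paper uses elsewhere (compare the sets $R_\sigma$ in Section~\ref{sec:psipoly}), and it fits the stated space and preprocessing budgets. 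But what it delivers is one standard oracle call \emph{per surviving candidate cell}, i.e.\ query time $O(Q(n)\log\Psi + \log n\log\Psi)$ --- the multiplicative bound, not the claimed additive one.

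The step you defer --- a single canonical target $w(s,q)$ with ``$s$ geometrically reaches $q$ iff $s \to^{*} w$'' --- is a genuine gap, and the structural lemma you hope for is false. Vertices whose disks cover $q$ need not be linearly ordered by reachability; they need not even be pairwise comparable or lie in the same weakly connected component. Take $q$ at the origin, $t_1 = (1,0)$ and $t_2 = (-1,0)$, both of radius $1$: both disks contain $q$, yet $|t_1t_2| = 2 > 1$, so $G$ has no edge between them in either direction, and it may contain no other connection. A source $s_1$ adjacent only to $t_1$ and a source $s_2$ adjacent only to $t_2$ both geometrically reach $q$ via incomparable witnesses, so no extremal choice (largest radius, lowest level, etc.) serves both. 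The same happens across levels: with $t_2 = (-3.9,0)$ and $r_{t_2} = 4$, the large vertex covers $q$ but does not reach the small covering vertex $t_1$, refuting the ``monotone nature'' you invoke. Allowing $w$ to depend on $s$ does not rescue this: within your time budget $w(s,q)$ can be computed only from positions and radii, while the example shows the correct witness is determined by reachability, not geometry --- and determining which candidate $s$ actually reaches is precisely what the oracle calls you are trying to avoid would be for, so the argument is circular. Closing the gap requires a genuinely new ingredient (for instance, precomputed reachability bits from vertices to suitable representatives, or non-black-box use of the oracle), none of which is sketched in the proposal; as written, only the multiplicative bound is established.
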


To achieve a fast preproccesing time, we need  a sparse approximation of
the transmission graph $G$. Let $\eps > 0$
be constant.
A $(1+\eps)$-\emph{spanner} for $G$ is a sparse subgraph $H \subseteq G$ such
that for any pair of vertices $p$ and $q$ in $G$ we have $d_H(p,q) \leq
(1+\eps) d_G(p,q)$ where $d_H$ and $d_G$ denote the shortest path distance in
$H$ and in $G$, respectively. In our
companion paper we show that $(1+\eps)$-spanners for transmission graphs can be
constructed efficiently~\cite{KaplanEtAl15}.
\begin{theorem}[Theorem~3.12 in~\cite{KaplanEtAl15}]
\label{thm:2dspanner}
  Let $G$ be the transmission graph for a set $P$ of
  $n$ points in the plane
  with radius ratio $\Psi$. For any fixed $\eps > 0$,
  we can compute
  a $(1+\eps)$-spanner for $G$ with $O(n)$ edges in $O(n(\log n + \log \Psi))$ time
  using  $O(n \log \Psi)$ space.
\end{theorem}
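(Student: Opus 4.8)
The plan is to reduce the global spanner condition to a purely local, per-edge requirement, and then to realize the needed local connections through a radius-indexed grid hierarchy. First observe that since $H \subseteq G$ and the length of every edge equals the Euclidean distance between its endpoints, it suffices to guarantee that for \emph{each single edge} $(p,q)$ of $G$ there is a directed path from $p$ to $q$ in $H$ of length at most $(1+\eps)|pq|$. Indeed, given a shortest path $p = v_0 \to \dots \to v_k = q$ in $G$, concatenating the guaranteed detours for its edges yields $d_H(p,q) \le (1+\eps)\sum_j |v_j v_{j+1}| = (1+\eps)\,d_G(p,q)$; finiteness is preserved as well, so $H$ captures the reachability of $G$ exactly. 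Thus the entire construction only has to $(1+\eps)$-approximate the edges of $G$.

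To do this I would partition $P$ into $O(\log \Psi)$ radius levels, placing $p$ on level $i$ when $2^i \le r_p < 2^{i+1}$; since all radii lie in $[1,\Psi]$ there are $O(\log\Psi)$ nonempty levels, which is the source of the $\log\Psi$ factors. An edge $(p,q)$ with $p$ on level $i$ has $|pq| \le r_p < 2^{i+1}$, so its scale is fixed by the source level. For the geometry I would use the grids $\Q_i$ together with a compressed quadtree of $P$ (built in $O(n\log n)$ time, of size $O(n)$, supplying representatives at every scale regardless of spread); the constant-time cell-location assumption lets me bucket the $n$ points into each of the $O(\log\Psi)$ relevant grids, using $O(n\log\Psi)$ space and time. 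For a vertex $p$ on level $i$, the disk $D(p)$ meets only $O(1/\eps^2) = O(1)$ cells of a grid of cell diameter $\Theta(\eps 2^i)$; in each such cell I choose one reachable representative and add the directed edge from $p$ to it. This gives $O(1)$ out-edges per vertex, hence $O(n)$ edges overall, and the running time $O(n(\log n + \log\Psi))$ and space $O(n\log\Psi)$ follow from the quadtree construction and the per-level bucketing.

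For the approximation bound on an edge $(p,q)$, let $\sigma$ be the cell containing $q$ and $q'$ the representative chosen in $\sigma$, so that $|q'q| \le \operatorname{diam}(\sigma) = O(\eps|pq|)$; the single hop $p \to q'$ is then within a $(1+O(\eps))$ factor of $|pq|$, and it remains to route from $q'$ to $q$ inside the tiny cell $\sigma$, which one finishes by descending the quadtree to finer scales (this standard recursion also removes the difficulty that $q$ might be extremely close to $p$, a case a single grid is too coarse to handle). The step I expect to be the main obstacle is the one with no counterpart in the undirected Euclidean setting: because $G$ is directed, \emph{every} hop of the approximating path must itself be a valid transmission edge, i.e.\ each intermediate vertex must have radius at least the length of the hop it makes. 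Near a small-radius target $q$ the candidate intermediate vertices may be unable to cover the remaining distance, so the representative choice must be made radius-aware -- selecting, whenever possible, a representative whose radius is large enough to continue toward $q$ -- and one must invoke the level structure to argue that such a representative is available exactly at the scale where it is needed, while still adding only $O(1)$ edges per vertex. Verifying that these radius-aware detours always complete, without inflating the edge count, is the delicate part of the argument.
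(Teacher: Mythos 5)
This theorem is not proved in the paper at all---it is imported verbatim as Theorem~3.12 of the companion paper \cite{KaplanEtAl15}---so the comparison must be with that construction. Your opening reduction (it suffices to $(1+\eps)$-approximate every single edge of $G$) and your use of $O(\log\Psi)$ radius levels do match how the real argument starts. But the construction you build on top of this---for each source $p$ of level $i$, one out-edge to a single ``representative'' in each cell of diameter $\Theta(\eps 2^i)$ meeting $D(p)$---is not merely missing a verification; it fails, and no choice of representative can repair it. Take a single point $p$ with $r_p = 2^i \gg 1/\eps$ and $k = \Theta\bigl((\eps 2^i)^2\bigr)$ points of radius $1$ lying in one cell $\sigma$ inside $D(p)$, pairwise more than distance $2$ apart. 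Then $G$ is exactly the star of edges $(p,q_1),\dots,(p,q_k)$: the $q_j$ can reach nothing. Every $q_j$ must retain an in-edge in any spanner (otherwise its stretch is infinite, not $1+\eps$), so all $k$ forced edges from $p$ into the one cell $\sigma$ must survive, while your scheme keeps one. Since all candidates in $\sigma$ have equal radius, ``radius-aware'' selection cannot help, and your fallback of routing from the representative $q'$ to $q$ by quadtree descent fails for the same reason: in this example there is no $G$-path from $q'$ to $q$ at all. (A smaller error in the same vein: the cell diameter $\Theta(\eps 2^i)$ is not $O(\eps|pq|)$ when $|pq| \ll 2^i$, because your cell size is tied to $r_p$, not to $|pq|$.)

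The structural lesson, and the way the cited construction actually works, is that in a directed transmission graph the in-edges of small-radius vertices can be \emph{forced}, so edges must be selected per target, not per (source, cell). The companion paper uses a Yao/$\Theta$-graph-type rule: each vertex $q$ keeps, for each cone with apex $q$, an in-edge from (approximately) the nearest in-neighbor $p^*$ in that cone; the grids and radius levels serve only to compute these in-neighbors efficiently, which is where the $\log\Psi$ enters the time and space bounds but not the edge count $O(n)$. Directedness is then handled by the invariant your sketch lacks: for any edge $(p,q)$ of $G$ with $p$ in the given cone, the cone geometry yields $|pp^*| \le |pq| \le r_p$, so $(p,p^*)$ is again an edge of $G$, and induction on the rank of edge lengths produces a path in $H$ from $p$ to $p^*$ followed by the kept edge $(p^*,q)$, all of whose hops are genuine transmission edges. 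Note that in the counterexample above this rule automatically retains every forced edge, since each $q_j$ has $p$ as its nearest in-neighbor. So the step you deferred as ``the delicate part'' is in fact the entire theorem, and the mechanism that resolves it---target-centric selection together with this monotonicity of edge lengths---is absent from your proposal.
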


\section{Reachability Oracles for 1-dimensional Transmission Graphs}
\label{sec:1d}
In this section, we prove the existence of efficient reachability
oracles for one-dimensional transmission graphs and show that they
can be computed quickly.
\begin{theorem}
\label{thm:1doracle}
Let $G$ be the transmission graph of an $n$-point set $P \subset \R$.
Given the point set $P$ with the associated radii,
we can construct in $O(n \log n)$ time a geometric reachability oracle for
$G$ that requires  $O(n)$ space and can answer a query in  $O(1)$ time.
\end{theorem}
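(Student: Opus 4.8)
The plan is to reduce the whole theorem to one structural fact: the set of points of $\R$ that a vertex $s$ can reach (geometrically) is always a single contiguous interval. Concretely, for a vertex $s$ let me set $\rho^{+}(s)=\max\{\,t+r_t : t\in P\text{ is reachable from }s\,\}$ and $\rho^{-}(s)=\min\{\,t-r_t : t\in P\text{ is reachable from }s\,\}$, identifying each point with its coordinate so that $D(t)=[\,t-r_t,\,t+r_t\,]$. I claim the geometric reach of $s$ is exactly $[\rho^{-}(s),\rho^{+}(s)]$. The key sublemma is a ``first crossing'' argument: if a vertex $w$ reachable from $s$ satisfies $w\ge u$ for some point $u>s$, then $u$ is geometrically reachable from $s$. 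Indeed, take a path $s=p_0\to p_1\to\cdots\to p_k=w$ and let $p_j$ be the first vertex with $p_j\ge u$; then $p_{j-1}<u\le p_j$, and the edge $p_{j-1}\to p_j$ gives $p_j-p_{j-1}\le r_{p_{j-1}}$, hence $u-p_{j-1}\le r_{p_{j-1}}$, i.e.\ $u\in D(p_{j-1})$ with $p_{j-1}$ reachable. Every reachable point right of $s$ lies in some $D(t)$ with $t$ reachable, so it is $\le\rho^{+}(s)$; conversely, any $u\in(s,\rho^{+}(s)]$ is reachable (if $u$ is at most the rightmost reachable vertex, apply the sublemma; otherwise $u$ lies in $D(t^{*})$ for the vertex $t^{*}$ attaining $\rho^{+}(s)$). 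The symmetric statement handles the left side, so the reach is the full interval $[\rho^{-}(s),\rho^{+}(s)]$. The same analysis also settles standard queries: a vertex $t$ is reachable from $s$ if and only if $t\in[\rho^{-}(s),\rho^{+}(s)]$, since any vertex inside this interval forces an incoming edge from a reachable vertex.

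Given this, the oracle itself is immediate. I would store, for each vertex $s$, the two numbers $\rho^{-}(s)$ and $\rho^{+}(s)$, using $O(n)$ space. A geometric query $(s,q)$ with $q\in\R$ is answered ``yes'' iff $\rho^{-}(s)\le q\le\rho^{+}(s)$, and a standard query $(s,t)$ iff $\rho^{-}(s)\le t\le\rho^{+}(s)$; both take $O(1)$ time. Thus the entire content is reduced to computing all pairs $(\rho^{-}(s),\rho^{+}(s))$ within the $O(n\log n)$ time budget.

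For the computation I would first sort $P$, which costs $O(n\log n)$, and then exploit the recursion $\rho^{+}(s)=\max\bigl(s+r_s,\ \max_{s\to u}\rho^{+}(u)\bigr)$ together with its mirror for $\rho^{-}$. The obstacle here, and the part I expect to be the heart of the matter, is \emph{efficiency}: the transmission graph can have $\Theta(n^{2})$ edges, and because reachability may require ``dipping'' to the left to pick up a large-radius vertex before proceeding right (and vice versa), no naive one-directional sweep computes the reach intervals correctly. The recursion is also cyclic, so a plain dynamic program does not apply. My plan is to replace $G$ by a sparse reachability-preserving graph and then use strongly connected components. The direct out-neighbours of each vertex $s$ form a contiguous block of the sorted order (all vertices in $D(s)$), so I can route these ``range edges'' through a segment tree over the sorted points, connecting each vertex to the $O(\log n)$ canonical nodes covering its block; this yields an $O(n\log n)$-edge graph whose reachability matches $G$ exactly. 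Assigning the auxiliary nodes neutral values ($-\infty$ for the $+$ quantity, $+\infty$ for the $-$ quantity), I would run Tarjan's algorithm, contract each strongly connected component, and propagate $\rho^{+}$ as a maximum and $\rho^{-}$ as a minimum in reverse topological order, all in time linear in the number of edges. Every original vertex then inherits $\rho^{\pm}$ from its component, giving total preprocessing $O(n\log n)$ and completing the construction.
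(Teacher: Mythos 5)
Your proposal is correct, and its structural core coincides with the paper's: your ``first crossing'' sublemma is exactly the paper's key observation (Lemma~\ref{obs:reachInbetween}) that the disks along a path cover the whole stretch between the start vertex and the path's extremes, so the reach of a vertex is a single interval $[\rho^-(s),\rho^+(s)]$ (the paper calls these the left and right reachpoints of the vertex's SCC). Where you genuinely diverge is in the computation. The paper stays geometric: it proves that the SCC intervals form a laminar family (Lemma~\ref{obs:laminar}), builds the containment tree, shows a reachpoint of an SCC is realized by the SCC itself or by a sibling (Lemma~\ref{lem:sibling}), computes all reachpoints by a stack-based left-to-right sweep over each sibling group, and extracts the SCCs without ever materializing the graph, via Kosaraju--Sharir run implicitly on a balanced search tree and an interval tree (Lemma~\ref{lem:findscc}). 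You instead materialize a sparse reachability-equivalent graph directly: since out-neighborhoods in one dimension are contiguous blocks of the sorted order, routing them through the $O(\log n)$ canonical nodes of a segment tree yields $O(n\log n)$ edges, after which Tarjan's algorithm plus a max/min propagation over the condensation DAG (with neutral values $\mp\infty$ on auxiliary nodes) finishes in time linear in the edge count. Your route is more mechanical and arguably easier to verify --- it sidesteps the laminar-family and sibling lemmas and the stack invariant, and it generalizes to any graph whose out-neighborhoods decompose into few canonical sets --- while the paper's route keeps the working space at $O(n)$ (your intermediate graph needs $\Theta(n\log n)$ space, though your final oracle is still $O(n)$, which is all the theorem claims) and exposes structural properties of one-dimensional transmission graphs that are of independent interest.
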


We begin with a simple structural observation. 
For $p \in P$, let $R_p = \{q \in P \mid p \text{ can reach } q\}$
be the set of all vertices that are reachable from $p$,
and let $I_p = \bigcup_{q \in R_p} D(q)$ denote the
union of their associated disks. Then, $I_p$ is an interval.

\begin{lemma}\label{obs:reachIB}
Let $p \in P$. There exist two points $\lr(p), \rr(p) \in \R$ such
that $I_p = [\lr(p), \rr(p)]$.
For any point $q \in \R$, the vertex $p$ can reach $q$ if and only if
$q \in [\lr(p),\rr(p)]$.
\end{lemma}

\begin{proof}
Let $\lr(p) = \min \{ s - r_s \mid s \in R_p\}$ and
$\rr(p) = \max \{ s + r_s \mid s \in R_p\}$.
From the definition, it follows that $I_p \subseteq [\lr(p), \rr(p)]$.
Conversely, let $q \in [\lr(p),\rr(p)]$, and assume
w.l.o.g that $q$ lies to the left of $p$. Let $s \in P$ be the
vertex that defines $\lr(p)$, i.e.,
$\lr(p) = s - r_{s}$.
By definition, there is a path $p = p_1 p_2 \dots p_k =  s$
from $p$ to $s$ in $G$.
Since $G$ is a transmission graph, we have $|p_i - p_{i+1}| \leq r_{p_i}$,
for $i = 1, \dots, k-1$, so the disks
$D(p_i)$ cover the entire interval $[\lr(p),p]$. Thus, there is a $p_i$
with $q \in D(p_i)$. This means that $[\lr(p), p] \subseteq I_p$.
Similarly, we have that $[p, \rr(p)] \subseteq I_p$,
so $[\lr(p), \rr(p)] \subseteq I_p$
The second statement of the lemma is now immediate.
\end{proof}

Lemma~\ref{obs:reachIB} suggests the following reachability
oracle with $O(n)$ space
and $O(1)$ query time:
for each $p \in P$, store
the endpoints $\lr(p)$ and $\rr(p)$. 
Given a query $p,q$, where $p$ is a vertex and $q$ a point in $\R$,
we return YES if and only if $q \in [\lr(p),\rr(p)]$.
It only remains to compute the interval endpoints 
$\lr(p)$ and $\rr(p)$ for all $p \in P$ efficiently.

\begin{lemma}
\label{lem:findlr}
We can find the left interval endpoint $\lr(p)$, 
for each $p \in P$, in $O(n\log n)$ total time.
An analogous statement holds for the right interval 
endpoints $\rr(p)$, for $p \in P$.
\end{lemma}

\begin{proof}
Let $p_1, p_2, \dots, p_n$ be the vertices in $P$,
sorted in ascending order of the left endpoints of
their associated disks: $p_1 - r_{p_1} \leq p_2 - r_{p_2} \leq \dots 
\leq p_n - r_{p_n}$.
Let $G'$ be the \emph{transpose graph} for $G$ in which the 
directions of all edges are reversed. We perform a
depth-first search in $G'$ with start vertex $p_1$, and 
we denote the set of all vertices
encountered during this search by $Q$. 
By construction, $Q$ contains exactly those vertices from which
$p_1$ is reachable in $G$, so
$\lr(q) = p_1$ if and only if $q \in Q$.
For each vertex $p \in P \setminus Q$, 
no vertex in $Q$ is reachable from $p$, i.e., 
$R_p \cap Q = \emptyset$.
Thus, we can repeat the procedure with the remaining
vertices to obtain all left interval endpoints.
The right interval endpoints are computed analogously.

For an efficient implementation, we store 
the $r_p$-balls around the vertices in $P$ in an 
\emph{interval tree} $T$~\cite{4M}. When a vertex $p$ is visited
for the first time, we delete the corresponding $r_p$-ball from
$T$. When we need to find an outgoing edge in $G$ from
a vertex $p$, we 
use $T$ to find one ball that contains $p$. This
can be done in $O(\log n)$ time. Since the depth-first search
algorithm traverses at most $n$ edges, this results in running
time $O(n \log n)$.
\end{proof}

\section{Reachability Oracles for 2-dimensional Transmission Graphs}
\label{sec:2d}
In the following sections we present three different geometric reachability
oracles for
transmission graphs in $\R^2$. By Theorem~\ref{thm:geometricreachability}, we
can focus on the construction of standard reachability oracles since they can
be extended easily to geometric ones. This has no effect on the space required and the time bound for a
query, expect for the oracle given
in Section~\ref{sec:psisqrt3}. This oracle  applies for $\Psi < \sqrt{3}$, it needs  $O(n
\log n)$ space and has $O(1)$ query time.
Thus, when we apply the transformation from an oracle that can answer standard reachability queries to
an oracle that can answer geometric reachability queries, we increase the query time of this oracle to $O(\log n)$.

\subsection{$\Psi$ is
less than $\sqrt{3}$}
\label{sec:psisqrt3}
Suppose that $\Psi \in [1,\sqrt{3})$. In this case, we show that
we can make $G$ planar by
first removing unnecessary edges and then resolving edge crossings
by adding $O(n)$ additional vertices.
This will not change the reachability relation between
the original vertices. The existence of efficient reachability
oracles then follows from known results for directed planar graphs.
The main goal is to prove the following lemma.

\begin{lemma}\label{lem:planarization}
Let $P$ be a set of $n$ points in $\R^2$ with $\Psi < \sqrt{3}$ and let
$G$ be the transmission graph for  $P$. We can compute, in  $O(n \log n)$ time,
 a plane
graph $H = (V, E)$ such that
\begin{enumerate}[(i)]
\item $|V| = O(n)$ and $|E| = O(n)$;
\item $P \subseteq V$; and
\item for any $p,q \in P$, $p$ can reach $q$ in $G$ if and only if $p$ can
reach $q$ in $H$.
\end{enumerate}
\end{lemma}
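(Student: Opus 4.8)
The plan is to exploit the fact that $\Psi<\sqrt3$ forces every edge of $G$ to be short: since the smallest radius is $1$, every edge $p\to q$ satisfies $|pq|\le r_p<\sqrt3$. I would first impose locality by overlaying the grid $\Q_0$, whose cells have diameter $1$. Any two points in a common cell are at distance at most $1\le r_p$, so they are mutually adjacent; hence the points of each cell form a clique and lie in a single SCC. Contracting each cell to a super-vertex placed at the cell centre preserves reachability among the original points, because contracting a strongly connected set never changes reachability. Since edges have length $<\sqrt3$ while cells have side $1/\sqrt2$, a super-vertex is joined only to the $O(1)$ cells within constant grid distance, so the super-graph $G^\ast$ has $O(n)$ vertices, maximum degree $O(1)$, and $O(n)$ edges. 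In the straight-line drawing at the cell centres each super-edge can be crossed only by the $O(1)$ edges incident to the $O(1)$ nearby cells, so $G^\ast$ has only $O(n)$ crossings. The super-edges (one per ordered adjacent cell pair carrying a transmission edge) can be obtained from the $(1+\eps)$-spanner of Theorem~\ref{thm:2dspanner} in $O(n\log n)$ time, since $\log\Psi=O(1)$ here.

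The heart of the argument is a \emph{crossing lemma}: if two transmission edges $p\to q$ and $s\to t$ cross, then $G$ contains a path from $p$ to $t$ and a path from $s$ to $q$, each running only through the four points $\{p,q,s,t\}$. I would prove this by placing the crossing point $x$ at the origin and putting $p,q$ on one line through $x$ and $s,t$ on another. The existence of the two edges forces $r_p\ge|pq|$ and $r_s\ge|st|$, so whenever the crossing distances are large the relevant radii are automatically large enough to realise the shortcut edges directly; when instead the radii are close to $1$, both edges are short and the endpoint of each edge nearest to $x$ is close enough to the other edge's nearest endpoint to supply the missing hop. Pushing this trade-off through a short case analysis on which endpoints are nearer to $x$ yields both required reachabilities. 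This is exactly where the threshold $\sqrt3$ enters, and making the case analysis clean and exhaustive is the step I expect to be the main obstacle.

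Granting the crossing lemma, every crossing is \emph{benign}: replacing a crossing of $p\to q$ and $s\to t$ by a new vertex $w$ with edges $p\to w\to q$ and $s\to w\to t$ only adds the reachabilities ``$p$ reaches $t$'' and ``$s$ reaches $q$'', both of which already hold in $G$. I would therefore resolve all $O(n)$ crossings of $G^\ast$ in this way, adding $O(n)$ vertices and obtaining a plane super-graph $H^\ast$ with $O(n)$ vertices and edges. Correctness is then immediate from a soundness/completeness pair: every edge of $H^\ast$ (including the split edges, after bypassing the dummy vertex) corresponds to a genuine reachability of $G$, so $H^\ast$ introduces no new reachabilities, while the construction plainly retains all old ones.

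It remains to meet condition~(ii), $P\subseteq V$. Because $H^\ast$ need only be \emph{abstractly} planar, I would re-expand each multi-point super-vertex $v_A$ by placing the points of cell $A$ inside a tiny disk around $v_A$ in the embedding and joining them to a representative by a bidirected star; a star is planar and strongly connected, its edges are genuine transmission edges of the clique, and the $O(1)$ edges formerly incident to $v_A$ can be reattached without new crossings. This yields the final plane graph $H=(V,E)$ with $P\subseteq V$, $|V|=|E|=O(n)$, and the same reachability as $G$. All steps run in $O(n\log n)$: the grid and the contraction are linear, the spanner and the crossing detection cost $O(n\log n)$, and the benign splitting and the expansion are linear. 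The only genuinely delicate ingredient is the crossing lemma of the second paragraph; everything else is bookkeeping.
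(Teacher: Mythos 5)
Your high-level skeleton (grid $\Q_0$ to enforce locality, a crossing lemma at the threshold $\sqrt{3}$, resolution of crossings by dummy vertices, all of $P$ retained as vertices) is the same as the paper's, and your crossing lemma is exactly the paper's Lemma~\ref{lem:resolving}. But there is a genuine gap at the step you call immediate: the claim that, granting the crossing lemma, ``every crossing is benign'' and correctness follows. That claim is true for a \emph{single} resolution, but the resolutions must be iterated, and after the first one the graph contains edge \emph{fragments} ending at dummy vertices. A path in the fully resolved graph may switch supporting edges many times, say $p_1 \rightarrow x_1 \rightarrow x_2 \rightarrow \dots \rightarrow q_k$, where $x_i$ is the crossing point of supporting edges $p_iq_i$ and $p_{i+1}q_{i+1}$. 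The crossing lemma gives ``$p_i$ reaches $q_{i+1}$ in $G$'' for each $i$, but these facts do not chain: the path from $p_1$ to $q_2$ that the lemma promises may avoid $p_2$ entirely (e.g.\ $p_1 \rightarrow q_1 \rightarrow q_2$), and nothing says $q_2$ reaches $q_3$; so ``$p_1$ reaches $q_2$'' and ``$p_2$ reaches $q_3$'' do not yield ``$p_1$ reaches $q_3$.'' The paper flags precisely this difficulty (the new vertices and edges are not ones to which Lemma~\ref{lem:resolving} applies) and spends its main technical effort closing it: Lemma~\ref{lem:globalreachability} takes a minimal ``bad pair'' joined by a path with the fewest support switches, derives structural properties \textbf{(P1)}--\textbf{(P6)}, and runs a geometric case analysis (the triangles $p_1x_1q_2$ and $x_1q_1q_2$) to show such a path could always be rerouted with fewer switches, a contradiction. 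That global argument is the real content of the planarization proof, and it is exactly the step your proposal dismisses as bookkeeping.

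There is a second, independent problem: you apply the crossing lemma to crossings of segments drawn between \emph{cell centres}, but the lemma is a statement about the actual segments $\overline{pq}$, $\overline{st}$ of transmission edges, proved by a computation that is tight at the threshold $\sqrt{3}$; it does not survive moving each endpoint by up to $1/2$ (the displacement from a point to its cell centre). Two transmission edges can be disjoint while their centre-to-centre segments cross, and then nothing forces the reachabilities needed for the resolution to be sound, so your $H^{\ast}$ could assert false reachabilities. The paper avoids this by never moving points: it keeps $P$ as the vertex set, connects the points inside each cell by their crossing-free Euclidean MST, adds one genuine edge of $G$ per pair of neighbouring cells, and draws everything as straight segments between the true point locations (Lemma~\ref{lem:pruning}). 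Then every crossing in the drawing is a crossing of genuine transmission edges, and Lemma~\ref{lem:resolving} is actually applicable. Your final re-expansion of super-vertices into stars is fine as far as it goes, but it comes too late to repair the crossing structure; the contraction should simply be avoided, as in the paper.
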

Given Lemma~\ref{lem:planarization}, we can obtain our reachability
oracle from known results.

\begin{theorem}
\label{thm:2doraclesmall}
Let $G$ be the transmission graph for a set $P$ of
$n$ points in $\R^2$ of radius ratio  less than
$\sqrt{3}$.
Then, we can construct in
$O(n\log n)$ time a standard reachability oracle for $G$ with $S(n) = O(n)$
and $Q(n) = O(1)$ or a geometric reachability oracle for $G$ with $S(n) =
O(n)$ and $Q(n) = O(\log n)$.
\end{theorem}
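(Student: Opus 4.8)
The plan is to combine the planarization from Lemma~\ref{lem:planarization} with a black-box reachability oracle for directed planar graphs, and then to lift the result to geometric queries via Theorem~\ref{thm:geometricreachability}. First I would invoke Lemma~\ref{lem:planarization} to compute, in $O(n\log n)$ time, a plane graph $H = (V,E)$ with $|V|, |E| = O(n)$, $P \subseteq V$, and such that for all $p, q \in P$, the vertex $p$ reaches $q$ in $G$ if and only if $p$ reaches $q$ in $H$. This reduces the task of answering standard reachability queries on $G$ between vertices of $P$ to answering reachability queries on the planar digraph $H$.

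Next I would build the reachability oracle of Holm, Rotenberg, and Thorup~\cite{Holm2014} on $H$. As noted in the introduction, this oracle is constructed in time linear in the size of $H$ and achieves optimal parameters; since $|V|, |E| = O(n)$, it uses $S(n) = O(n)$ space, answers queries in $Q(n) = O(1)$ time, and is built in $O(n)$ time. To answer a standard reachability query for a pair $p, q \in P$, I simply query this oracle on $p, q \in V$, and property (iii) of Lemma~\ref{lem:planarization} guarantees correctness. The preprocessing time is dominated by the planarization step, so the total is $O(n\log n)$, which yields the claimed standard oracle.

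For the geometric oracle, I would feed this standard oracle into Theorem~\ref{thm:geometricreachability}. That transformation adds $O(n \log \Psi)$ to the space and $O(\log n \log \Psi)$ to the query time, and runs in $O(n \log n \log \Psi)$ time. The crucial observation is that in this regime $\Psi < \sqrt{3}$, so $\log \Psi = O(1)$ and every $\Psi$-dependent factor collapses to a constant. Hence the space stays $O(n)$, the query time becomes $O(1) + O(\log n) = O(\log n)$, and the preprocessing time remains $O(n\log n)$.

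I expect the only subtlety to be bookkeeping: confirming that the $O(n)$ construction time of the planar oracle is subsumed by the $O(n\log n)$ planarization cost, and that the additive overhead of the geometric transformation indeed vanishes because $\log \Psi = O(1)$. All the substantive geometric content is encapsulated in Lemma~\ref{lem:planarization}, which I take as given; once that lemma is in hand, the theorem follows by chaining the two black-box results above.
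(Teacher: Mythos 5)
Your proposal matches the paper's proof exactly: apply Lemma~\ref{lem:planarization}, build the Holm--Rotenberg--Thorup oracle~\cite{Holm2014} on the resulting $O(n)$-size plane graph, and invoke Theorem~\ref{thm:geometricreachability} for the geometric version. Your extra bookkeeping (noting that $\log\Psi = O(1)$ when $\Psi < \sqrt{3}$, so the transformation's overhead collapses to $O(n)$ space and $O(\log n)$ query time) is correct and only makes explicit what the paper leaves implicit.
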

\begin{proof}
We apply Lemma~\ref{lem:planarization}
and construct the distance oracle of Holm, Rotenberg, and Thorup for
the resulting graph~\cite{Holm2015}.
This distance oracle can be constructed in linear time, it needs
linear space, and it has constant query time. The result for the geometric
reachability oracle follows from Theorem~\ref{thm:geometricreachability}.
\end{proof}

We prove
Lemma~\ref{lem:planarization} in three steps. First, we show how to
make $G$ sparse without changing the set of
reachable  pairs. Then, we show how to turn $G$ into a planar graph.
Finally, we argue that we can combine these two operations to
get the desired result.

\paragraph*{Obtaining a Sparse Graph.}
We construct a subgraph
$H \subseteq G$ with the same reachability relation as $G$ but with
$O(n)$ edges and $O(n)$ edge crossings.
The bounded number of crossings allows us to obtain a planar
graph later on.
Consider the grid $\Q_0$ (as defined in Section~\ref{sec:prelims}), 
and let $\sigma \in \Q_0$ be a grid cell.
We say that an edge  of $G$ \emph{lies in}
$\sigma$ if both endpoints are contained in
$\sigma$.
The \emph{neighborhood} $N(\sigma)$ of $\sigma$ consists of
the $7 \times 7$ block of cells in $\Q_0$ with $\sigma$ at the center.
Two grid cells are \emph{neighboring} if they lie in each
other's neighborhood.
Since a cell in $\Q_0$ has side length $\sqrt{2}/2$,
the two endpoints of every edge in $G$ must lie in neighboring
grid cells.\footnote{Since the maximum edge length in $G$ is $\sqrt{3}$,
and since $2 \frac{\sqrt{2}}{2} < \sqrt{3} < 3 \frac{\sqrt{2}}{2}$,
the neighborhood $N(\sigma)$ needs to contain three cells in each
direction around $\sigma$.}

\begin{figure}[htb]
\centering
\includegraphics{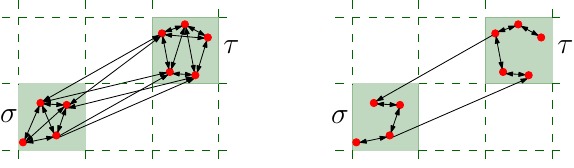}
\caption{The vertices and edges of two neighboring cells of $G$ (left) and
 of $H$ (right)}

 \label{fig:planarization}
\end{figure}

The subgraph $H$ has vertex set $P$, and we pick its edges as follows (see also
Figure~\ref{fig:planarization}):
for each non-empty cell $\sigma \in \Q_{0}$, we
set $P_\sigma = P \cap \sigma$, and
we compute the
Euclidean minimum spanning tree (EMST) $T_\sigma$ of $P_\sigma$.
For each edge $pq$ of $T_\sigma$, we add the directed edges
$pq$ and $qp$ to $H$.
Then, for every cell
$\tau \in N(\sigma)$,
we check if there are any edges from $\sigma$ to $\tau$ in $G$. If so, we
add an arbitrary such edge to $H$. We denote by
 $F$ the set of edges $pq$ such that $p$ and $q$ are in different cells.
The following lemma summarizes the properties of
$H$.

\begin{lemma}
\label{lem:pruning}
The graph $H$ has the following properties.
\begin{enumerate}[(i)]
\item for any two vertices $p$ and $q$, $p$ can reach $q$ in $G$ if and only if $p$ can reach $q$ in $H$;
\item $H$ has $O(n)$ edges;
\item $H$ can be constructed in $O(n \log n)$ time;
and
\item the straight line embedding of $H$ in the plane
has $O(n)$ edge crossings.
\end{enumerate}
\end{lemma}

\begin{proof}
(i): All edges of $H$ are also edges of $G$:
inside a non-empty cell $\sigma$, $P_\sigma$ induces
a clique in $G$, and the edges of $H$ between cells
lie in $G$ by construction. It follows that
if $p$ can reach $q$ in $H$ then $p$ can reach $q$ in $G$.

To show the converse let
$pq$ be an edge in $G$. We show that there
is a path from $p$ to $q$ in $H$.
If $pq$ lies in a cell $\sigma$ of $\Q_0$,
we take the path from $p$ to $q$ along the EMST $T_\sigma$.
If $pq$ goes from a cell $\sigma$ to
another cell $\tau$, then
there is an edge $uv$ from
$\sigma$ to $\tau$ in $H$, and
we take the path in $T_{\sigma}$ from $p$ to $u$,
then the edge $uv$, and finally the path
in $T_{\tau}$ from $v$ to $q$.

(ii): For a nonempty cell $\sigma$,
we create $|P_\sigma|-1$ edges inside $\sigma$. Furthermore,
since $|N(\sigma)|$ is constant,
there are  $O(1)$ edges between points in  $\sigma$ and points in other cells.
Thus, $H$ has $O(n)$ edges.

(iii): Since we assumed that we can find the cell for a vertex
$p \in P$ in constant time, we can easily compute the sets
$P_\sigma$, for every nonempty $\sigma \in \Q_0$,  in
$O(n )$ time.
Computing the EMST $T_\sigma$ for a cell $\sigma$ requires
$O(|P_\sigma|\log |P_\sigma|)$ time, which sums to
  $O(n \log n)$ time for all cells.
To find the edges of $F$ (i.e., edges between neighboring cells)
we build a Voronoi diagram together with a point location
structure for each set $P_\sigma$.
This takes $O(n\log n)$  time for all cells.
Let $\sigma$ and $\tau$ be two neighboring cells.
For each point in $P_\sigma$, we locate the nearest neighbor in
$P_{\tau}$ using the Voronoi diagram of $P_{\tau}$.
If there is a point $p \in P_\sigma$ whose nearest neighbor
$q \in P_{\tau}$ lies in $D(p)$, we add the edge $pq$ to $H$, and
we proceed to the next pair of neighboring cells.
Since $|N(\sigma)|$ is constant,
a point participates in $O(1)$ point location queries, each taking
 $O(\log n)$ time. The total running
time of all point location queries is $O(n \log n)$.

(iv):
Clearly each such crossing involves at least one edge of $F$ (the set of edges between points in different cells).
Each edge $e$ of $H$ intersects $O(1)$ cells $\sigma$ (this holds for edges in $F$ and trivially holds for edges inside cells). Each intersection of $e$ with an edge of $F$ must occur in one of these $O(1)$ cells that $e$ intersects.
On the other hand, each cell $\sigma$ intersects only $O(1)$ edges of $F$. So there are only $O(1)$ intersections per edge of $H$.
\end{proof}

\paragraph*{Making $G$ Planar.}
We now describe how to turn a graph $G$, embedded in the plane, into a planar graph. (This transformation can be applied to
any graph embedded in the plane. But Lemma \ref{lem:globalreachability} applies only if $G$ is a transmission graph.)
Suppose an edge $pq$
and an edge $uv$ of $G$
cross at a point $x$.
To eliminate this crossing, we add the intersection 
point $x$ as a new vertex to the graph, and we replace
$pq$ and $uv$ by the four new
edges $px$, $xq$, $ux$
and $xv$.
Furthermore, if $qp$ is an edge of $G$, we replace it by
the two edges $qx$, $xp$,
and if $vu$ is an edge of $G$, we replace it by
the two edges $vx$, $xu$. See Figure~\ref{fig:resolving}.
We say that this \emph{resolves} the crossing between $p,q,u$ and $v$.
Let $\widetilde{G}$ be the graph obtained by iteratively resolving all
crossings in $G$.
\begin{figure}[htb]
\centering
\includegraphics[scale=1.3]{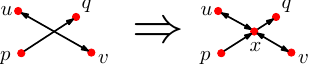}
\caption{Resolving a crossing. Since the edge $vu$ exists, we also add $vx$ and
$xu$ as edges.}
\label{fig:resolving}
\end{figure}

First, we want to show that resolving crossings keeps the \emph{local}
reachability relation between the four vertices of the crossing edges.
Intuitively speaking, the  restriction $\Psi < 3$
forces the vertices to be close together. This guarantees the existence of
additional edges between $p,q,u,v$ in $G$, and these edges
justify the new paths introduced by resolving the crossing.

To formally prove this, we first need a geometric observation.
For a point $p \in P$,
let $D(p,r)$ and $C(p, r)$ be the disk and the circle around
$p$ with radius $r$.

\begin{lemma}
\label{lem:calculus}
Let $p,q$ be two points in $\R^2$ with $|pq| = \sqrt{3}$.
\begin{enumerate}[(i)]
\item Let $a \in C(p,1) \cap C(q,1)$, and let  $b \in C(p,r) \cap C(q,r)$ for
some $r \in [1,\sqrt{3})$ such that
$a$ and $b$ lie on different
 sides of the line through $p$ and $q$.  Then $|ab| \geq r$. See Figure \ref{fig:calculusa1}.
\item Let
$\{a,b\} = C(p,\sqrt{3}) \cap C(q,1)$. Then, $|ab| > \sqrt{3}$. See Figure \ref{fig:calculusa}.
\end{enumerate}
\end{lemma}

\begin{proof}
(i): Let  $x$ be the intersection point
of the line segments $\overline{pq}$ and $\overline{ab}$.
Then $|ab| = |ax| + |xb|$.
Using that $|pa| = 1$ and $|px| = \sqrt{3}/2$, the Pythagorean Theorem
gives $|xa| = 1/2$. Similarly, we can compute $|xb|$ as a function of $r$:
 with $|pb| = r$ we get $|xb| = \sqrt{r^2 - 3/4}$.
We want to show that
\begin{align*}
 r \leq |ab|  = 1/2 + \sqrt{r^2 - 3/4}
\; \Leftrightarrow \; & r^2 \leq 1/4 + \sqrt{r^2 - 3/4} + r^2 - 3/4
\; \Leftrightarrow \;  1 \leq r^2,
\end{align*}
which holds since $r \in [1,\sqrt{3})$.

(ii): Let  $x$ be the intersection point
of $\overline{pq}$ and $\overline{ab}$.
Use the Pythagorean Theorem in the triangles $\triangle apx$ and $\triangle aqx$ in  Figure~\ref{fig:calculusa}
we get that $|ab| = 2\sqrt{11/12} > \sqrt{3}$.
\end{proof}

\begin{figure}[hbt]
\centering
\begin{subfigure}[b]{0.5\textwidth}
\centering
  \includegraphics[scale=0.53]{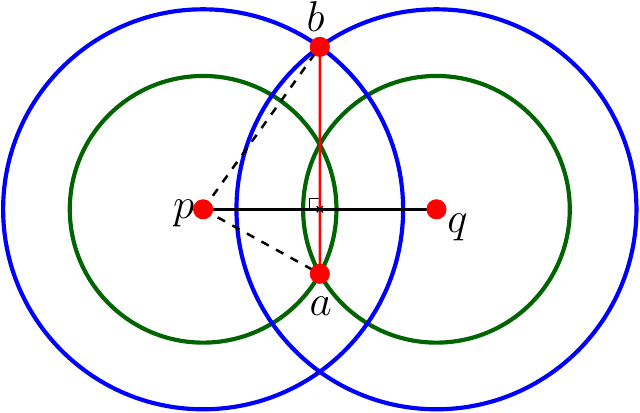}
  \caption{}
\label{fig:calculusa1}
\end{subfigure}
\begin{subfigure}[b]{0.45\textwidth}
  \includegraphics[scale=0.5]{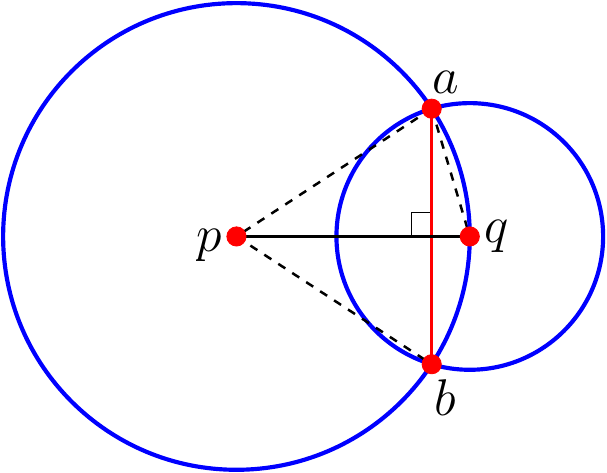}
\caption{}
\label{fig:calculusa}
\end{subfigure}
\caption{The cases (i) and (ii) of Lemma~\ref{lem:calculus}.}
\end{figure}

\begin{lemma}
\label{lem:resolving}
Suppose that $pq$ and $uv$ are
edges in a transmission graph $G$ that cross.
Let $G'\subseteq G$ be the transmission graph induced by
$p,q,u$ and $v$.  If $\Psi < \sqrt{3}$, then $p$ reaches $v$ in
$G'$ and $u$ reaches $q$ in $G'$.
\end{lemma}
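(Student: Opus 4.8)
The plan is to reduce both reachability claims to a single combinatorial statement and then settle that statement by a geometric extremal analysis driven by Lemma~\ref{lem:calculus}. First I would record a symmetry of the configuration: the relabeling $p \leftrightarrow u$, $q \leftrightarrow v$ turns the edge $pq$ into $uv$, the edge $uv$ into $pq$, preserves the crossing of the two segments, permutes the four radii (so $\Psi < \sqrt{3}$ is preserved), and exchanges the two assertions ``$p$ reaches $v$'' and ``$u$ reaches $q$''. Hence it suffices to prove that $p$ reaches $v$, and the other claim follows by applying the proven statement to the relabeled instance.

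Next I would reduce ``$p$ reaches $v$'' to the existence of a single cross edge. Since $p \to q$ and $u \to v$ are edges of $G'$, each of the four edges $p \to u$, $p \to v$, $q \to u$, $q \to v$ already forces $p$ to reach $v$ (e.g.\ $p \to q \to u \to v$ or $p \to q \to v$). Conversely, every path leaving $\{p,q\}$ must use one of these four edges, so if $p$ fails to reach $v$ then all four are absent, i.e.\ $|pu| > r_p$, $|pv| > r_p$, $|qu| > r_q$, and $|qv| > r_q$. I would then seek a contradiction with $|pq| \le r_p$, $|uv| \le r_u$, and the crossing of $\overline{pq}$ and $\overline{uv}$, recalling that all radii lie in $[1,\sqrt{3})$.

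For the contradiction I would view $p,u,q,v$ as the vertices (in this cyclic order) of a convex quadrilateral whose two diagonals are the edges $pq$ and $uv$, meeting at the crossing point $x$. Writing $a=|px|,\ b=|qx|$ and $c=|ux|,\ d=|vx|$ with $a+b \le r_p$ and $c+d \le r_u$, the four side lengths come from the law of cosines in the triangles at $x$, and the four assumed strict inequalities say exactly that $u$ and $v$ both lie \emph{outside} $D(p,r_p)$ and outside $D(q,r_q)$, while the crossing point $x$ lies on $\overline{pq}\subseteq D(p,r_p)$. Intuitively, making $\overline{uv}$ cross $\overline{pq}$ while keeping both of its endpoints outside two disks of radius at least $1$ forces $\overline{uv}$ to be long; the threshold $\sqrt{3}$ is precisely where $(\sqrt{3}/2)^2+(1/2)^2=1$, matching the symmetric configuration in which $p,q$ are $\sqrt{3}$ apart and $u,v$ sit on the perpendicular bisector at the two points of $C(p,1)\cap C(q,1)$. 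I would make this precise by pushing $u$ and $v$ inward until they rest on the circles $C(p,r_p)$ or $C(q,r_q)$ and normalizing $|pq|$ to its extreme value $\sqrt{3}$ (operations that only shrink $|uv|$ and preserve the crossing), and then invoke Lemma~\ref{lem:calculus}: part~(i) treats the near-symmetric case $r_p=r_q$, where $u,v$ are pinned to a common pair of circles and $|uv|\ge r_u$; part~(ii) handles the asymmetric case $r_p\ne r_q$, where the extremal points lie on circles of radii $\sqrt{3}$ and $1$ and $|uv|>\sqrt{3}$. Since the assumed inequalities are strict, $u$ and $v$ lie strictly outside the disks, so $|uv|$ strictly exceeds this extremal bound of at least $r_u$, contradicting $|uv|\le r_u$.

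The main obstacle is this last reduction. The configuration carries many free parameters (the four lengths $a,b,c,d$, the crossing angle, and the four radii), and the inequality is genuinely tight, so the elementary convex-quadrilateral bound $|pq|+|uv|>|pu|+|qv|$ does not suffice: it only yields $r_u>r_q$ rather than an outright contradiction. The real work lies in the monotonicity argument that collapses all these parameters onto the one- and two-circle extremal positions covered by Lemma~\ref{lem:calculus}, and in verifying that the asymmetric-radius case does not slip past the $\sqrt{3}$ threshold. Once the correct extremal configuration is isolated, the two parts of Lemma~\ref{lem:calculus} close the estimate.
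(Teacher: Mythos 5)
Your high-level skeleton is sound and mirrors the paper's strategy: the relabeling $p\leftrightarrow u$, $q\leftrightarrow v$ is a genuine symmetry of the hypothesis that exchanges the two claims, the reduction of ``$p$ reaches $v$'' to the presence of one of the four cross edges is correct, and the paper too proceeds by an extremal argument that pushes the offending points onto circles and uses monotonicity in $|pq|$. The gap is in the endgame, where you invoke Lemma~\ref{lem:calculus}, and it is exactly at the spot you flag as delicate. Once you spend the symmetry on discarding the claim ``$u$ reaches $q$'', you must prove ``$p$ reaches $v$'' for \emph{all} configurations, so you may no longer assume $r_p \geq r_u$. Moreover, your contradiction hypothesis is always ``$u,v \notin D(p,r_p) \cup D(q,r_q)$'': both points face the \emph{same} two disks, centered at $p$ and $q$, with possibly different radii. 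That is the geometry of Lemma~\ref{lem:calculus}(ii) only; part~(i) never arises in your reduction. Part~(i) concerns one point on $C(p,1)\cap C(q,1)$ and another on $C(p,r)\cap C(q,r)$, a configuration which in the paper comes from the non-edges \emph{out of} $u$ and $v$ --- i.e., from the claim ``$u$ reaches $q$'' that you threw away. So your assignment ``$r_p=r_q \Rightarrow$ part~(i)'' is a mismatch: with $r_p=r_q=1$, part~(i) would give only $|uv|\geq 1$, which does not contradict $|uv|\leq r_u$, since $r_u$ may be close to $\sqrt{3}$. (The correct bound in that subcase uses $|pq|\leq r_p=1$, not $|pq|=\sqrt{3}$, and yields $|uv|\geq\sqrt{3}$.) The asymmetric case breaks for a different reason: to place the constraint circles at radii $\sqrt{3}$ and $1$ as in part~(ii), you must rescale so that $r_p=\sqrt{3}$ (shrinking $r_q$ to $1$ is fine, as it only weakens constraints), but then the conclusion $|uv|>\sqrt{3}$ contradicts $|uv|\leq r_u$ only if $r_u\leq\sqrt{3}$ in the rescaled instance, i.e., only if $r_u\leq r_p$ --- precisely the assumption you forfeited. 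In the regime $r_u>r_p$ your argument does not close; also note that $|pq|$ cannot be ``normalized to $\sqrt{3}$'' without this rescaling, since $|pq|\leq r_p<\sqrt{3}$ at the original scale.

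The paper avoids this trap by keeping both claims and spending the symmetry on the single assumption $r_p\geq r_u$. Then ``$u$ does not reach $q$'' yields constraints from the disks of $u$ and $v$ (equal-radius circle pairs around $p$ and $q$), which is part~(i) and gives the bound $|uv|\geq r_u$ directly, while ``$p$ does not reach $v$'' yields the part~(ii) configuration, where $r_u\leq r_p=\sqrt{3}$ closes the contradiction. Each case is thereby a one-parameter extremal problem that Lemma~\ref{lem:calculus} covers. To salvage your single-claim route you would instead need a two-parameter extremal inequality: under your constraints, $|uv|\geq \sqrt{3}\,\min(r_p,r_q)$ for every ratio $r_q/r_p$ in $(1/\sqrt{3},\sqrt{3})$ and every $|pq|\leq r_p$, which then beats $r_u<\sqrt{3}\,\min(r_p,r_q)$. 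This inequality is true (at the extreme $|pq|=r_p$ it reduces, with $t=r_q/r_p$, to $(t^2-1)(t^2-3)\leq 0$), but it is not what Lemma~\ref{lem:calculus} provides, and its proof also has to handle the regime where the two circle-intersection points fail to straddle the segment $\overline{pq}$, so the crossing constraint, not the circles, is binding. You would have to state and prove this stronger lemma separately; as written, the proposal's final step does not go through.
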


\begin{proof}
We may assume that $r_p \geq r_u$. Furthermore, we assume that 
$r_q = r_v = 1$. This does
not add new edges and thus reachability in the new graph
implies reachability in $G'$.
We show that if either $u$ does not reach $q$ (case 1)
or $p$ does not reach $v$ (case 2), then $|uv| > r_u$. Hence
$uv$ cannot be an edge of $G'$ despite our assumption.

Case 1: $u$ does not reach $q$.
Then we have $p \notin D(u)$, $q \notin D(u)$, $p \notin D(v)$
and $q \notin D(v)$.
Equivalently this gives $u \notin D(p,r_u) \cup D(q,r_u)$
and $ v \notin D(p,1) \cup D(q,1)$.
Thus, the positions of $u$ and $v$ that minimize $|uv|$
are the intersections $u \in C(p,r_u) \cap C(q,r_u)$
and $v \in C(p,1) \cap C(q,1)$ on different sides of
the line through $p$ and $q$.
To further  minimize $|uv|$, observe that $|uv|$ depends on the distance
of $p$ and $q$ and that $|uv|$ strictly decreases as $|pq|$ grows, i.e.,
as $|pq|$ approaches $\sqrt{3}$.
For the limit case $|pq| = \sqrt{3}$, we are in the situation
of Lemma~\ref{lem:calculus}(i) with $a=u$ and $b=v$
and thus we would get $|uv| \geq r_u$.
But since $\Psi < \sqrt{3}$, we must have $|pq| < \sqrt{3}$ and by
strict monotonicity, it follows that $|uv| > r_u$,
as desired.

Case 2: $p$ does not reach $v$. Then we have
$u \notin D(p)$, $v \notin D(p)$, $u \notin D(q)$ and $v \notin D(q)$.
We scale everything, such that $r_p = \sqrt{3}$, and we reduce
$r_v$, $r_q$ once again to $1$.
Now, the positions of $u$ and $v$ minimizing $|uv|$ are
$\{u,v\} = C(p,\sqrt{3}) \cap C(q,1)$. As above,
further minimizing $|uv|$ gives $|pq| = \sqrt{3}$.
By Lemma~\ref{lem:calculus}(ii), we have $|uv| > \sqrt{3}$ and
thus $uv$ cannot be an edge of $G'$ (note that even after scaling we have
$r_u \leq \sqrt{3}$ since we assumed that $r_p \ge r_u$).
\end{proof}

Recall that we iteratively resolve crossings in $G$ and call the resulting graph
$\widetilde{G}$. Next, we show that for any $p, q \in P$, if $p$ can
reach $q$ in $\widetilde{G}$, then $p$ can also reach $q$ in $G$.
This seems to be a bit more difficult than what one might expect, because
when resolving the crossings, we introduce new vertices and edges
to which Lemma~\ref{lem:resolving} is not directly applicable 
(since the intermediate graph is not a transmission graph). Thus, 
a priori, we cannot exclude the possiblity that there are new 
reachabilities in $\widetilde{G}$ that use the additional 
vertices and edges.

\begin{lemma}
\label{lem:globalreachability}
Let $G$ be a transmission graph of a set $P$ of points with $\Psi < \sqrt{3}$. Let $\widetilde{G}$
be the planar graph obtained from $G$ by resolving all crossings as described above. Then,
for any two points  $p, q \in P$, $p$ can reach $q$ in $\widetilde{G}$ if and only if $p$
can reach $q$ in $G$.
\end{lemma}

\begin{proof}
If $p$ and can reach $q$ in $G$ then it immediately follows from our construction that $p$ can reach
$q$ in $\widetilde{G}$. We now prove the converse.

Each edge $e$ of $\widetilde{G}$ lies on an edge $e'$ of $G$ with the
same direction as $e$. We call $e'$ the \emph{supporting edge} of $e$.
Consider a path $\pi$ from $p$ to $q$ in $\widetilde{G}$.
A \emph{supporting switch} on $\pi$ is a pair of consecutive edges 
$\langle e,e' \rangle$ on $\pi$
such that the supporting edge of $e$ and the supporting edge of $e'$ are different.

A pair $p, q \in P$ such that $p$ can reach
$q$ in $\widetilde{G}$, but not in $G$ is called a
\emph{bad pair}.
The proof is by contradition. We assume that there exists a bad pair and  among all bad pairs,
we pick a pair $p, q$
and a path $\pi$ from $p$ to $q$ (in $\widetilde{G}$) such that
$\pi$ consists of  a
minimum number of \emph{supporting switches}, among all paths (in $\widetilde{G}$) between bad pairs.
Let $\langle e_1,e'_1 \rangle, \langle e_2,e'_2 \rangle,\dots,
\langle e_{k-1},e'_{k-1} \rangle$ be the supporting switches 
along $\pi$ and let
 $p_1q_1,\dots, p_kq_k$ be the
sequence of supporting edges as they are visited along
$\pi$ ($p_1 = p$, $q_k = q$).
That is $e_1$ is on $p_1q_1$, for $i=1,\ldots,k-2$, $e'_i$ and $e_{i+1}$ are on $p_{i+1}q_{i+1}$, and
$e'_{k-1}$ is on $p_kq_k$.
Let $x_i$ be the common vertex of $e_i$ and $e'_{i}$. The vertex $x_i$ is on the segments $\overline{p_iq_i}$ and
$\overline{p_{i+1}q_{i+1}}$.

\begin{figure}[htb]
\centering
\includegraphics[scale=1.0]{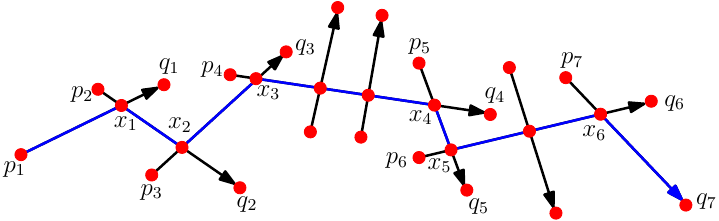}
\caption{A path (blue) with $k=7$ supporting edges that is in
$\widetilde{G}$ but not in $G$.}
\label{fig:globalreachability-path}
\end{figure}

\begin{claim}
The following holds in $G$:
\textbf{(P1)} $p_1$ reaches $q_2,\dots,q_{k-1}$;
\textbf{(P2)} $p_2,\dots,p_k$ reach $q_k$;
\textbf{(P3)}  $p_1$ and $q_1$ do not reach $p_2,\dots,p_k$; and
\textbf{(P4)} there is no edge $q_ip_i$,
for $i \geq 2$.
Furthermore, for $i = 1,\dots,k-1$, we have that
\textbf{(P5)}
the vertex $x_i$ is in the interior of
$\overline{p_iq_i}$ and
$\overline{p_{i+1}q_{i+1}}$
and \textbf{(P6)} $x_{i+1}$ lies in the interior of
$\overline{x_iq_{i+1}}$.
\end{claim}

\begin{proof}
\textbf{P1} and \textbf{P2} follow from the minimality of $\pi$,
and \textbf{P3} follows from \textbf{P2}.
For \textbf{P4}, assume
that $G$ contains an edge $q_ip_i$, for $i \geq 2$.
By \textbf{P1}, $p_1$ reaches $q_i$ in $G$ and thus $p_1$ reaches $p_i$,
despite \textbf{P3}.
For \textbf{P5}, notice that if $x_i$ is not in the interior of 
$\overline{p_iq_i}$ and
$\overline{p_{i+1}q_{i+1}}$, then
$x_i = q_i = p_{i+1}$. But then, by \textbf{P1}, $p_1$ reaches
$q_i = p_{i+1}$, despite \textbf{P3}.
\textbf{P6}
is immediate from \textbf{P5} and the fact that 
$p_{i+1}q_{i+1}$ cannot be equal to $q_{i}p_i$.
\end{proof}

By Lemma~\ref{lem:resolving}, we have $k \geq 3$,
since for two crossing edges ($k = 2$) no new 
reachabilities between the endpoints are created.
We now argue that the path $\pi$ cannot exist.
Since $p_1q_1$ and $p_2q_2$ cross,
Lemma~\ref{lem:resolving} implies that $G$ contains at least one
of
$p_1p_2,q_1p_2,p_1q_2$,
or $q_1q_2$. This is because by Lemma~\ref{lem:resolving}, in the 
induced subgraph for $p_1$, $p_2$, $q_1$, $q_2$, the vertex 
$p_1$ can reach $q_2$, and this requires that at least one of the edges
$p_1p_2,q_1p_2,p_1q_2$, or $q_1q_2$ be present.
By \textbf{P3}, neither
$p_1p_2$ nor $q_1p_2$
exist.
There are two cases, depending on whether $G$ contains
$p_1q_2$, or $q_1q_2$
(see Fig.~\ref{fig:globalreachability}).
Each case  leads to a contradiction with the minimality of $\pi$.
\begin{figure}[htb]
\centering
\includegraphics[scale=0.7]{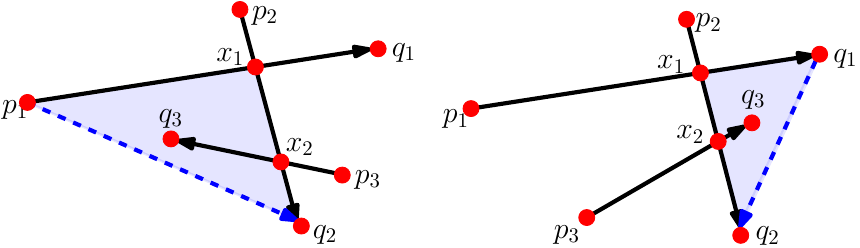}
\caption{Either $p_1q_2$ or $q_1q_2$ locks $x_3$ in
the corresponding triangle.}
\label{fig:globalreachability}
\end{figure}

\textbf{Case 1.} $G$ contains $p_1q_2$.
Consider the triangle $\triangle = \triangle p_1x_1q_2$.
Since $q_2,x_1 \in D(p_1)$, we have
$\triangle \subset D(p_1)$.
Thus, by \textbf{P3}, none of $p_2,\dots,p_k$
may lie inside $\triangle$.
By \textbf{P6}, $p_3q_3$
intersects the boundary of $\triangle$ in the line segment
$\overline{x_1q_2}$.
First, suppose that $k = 3$. In this case
$ q_3 \not\in \triangle$ (otherwise $p_1$ could reach $q_3$).
Thus, $p_3q_3$ intersects the boundary of $\triangle$
twice, so $p_3q_3$ either intersects
 $p_1q_1$ or $p_1q_2$.
In both cases, Lemma~\ref{lem:resolving} shows that $p_1$ reaches
$q_3$. Thus, we must have $k \geq 4$.

We now prove that the intersection $x_3$ of
$p_3q_3$ and $p_4q_4$ must
lie in $\triangle$.
If
 $p_3q_3$ intersects $\triangle$
once, then $q_3 \in \triangle$,
and therefore $x_3$, that by \textbf{P6} must lie on the segment
$x_2q_3$, is in $\triangle$. So assume that
$p_3q_3$ intersects
$\triangle$ twice, and let $y$ be the second intersection
point of $p_3q_3$ with the boundary of $\triangle$. We claim that $y$ follows $x_2$
along $p_3q_3$. Assume otherwise, then since by \textbf{P6}, $x_3$  follows
 $x_2$ on $p_3q_3$,
we can construct a path with fewer supporting switches than $\pi$:
If $y \in \overline{p_1x_1}$, we omit
$p_2q_2$ and
if $y \in p_1q_2$,
we omit $p_2q_2$
and substitute $p_1q_1$ by $p_1q_2$.
By the same argument, $x_3$ cannot follow $y$ on
$p_3q_3$.
Thus, $x_3$ lies on
the line segment $\overline{x_2y} \subset \triangle$.
This concludes the proof that
 $x_3 \in \triangle$.
Now, consider the segment $\overline{p_4x_3}$. Since we observed that
$p_4 \not\in\triangle$, we have that $\overline{p_4x_3}$
intersects $\triangle$, and we can again replace $\pi$ by a path with fewer supporting 
switches from $p$ to $q$.

\textbf{Case 2.} $G$ contains $q_1q_2$.
Consider the triangle
$\triangle = \triangle x_1q_1q_2$. We claim
that $\triangle \subset D(p_1) \cup D(q_1)$.
Then the  argument continues analogously  to
Case 1. In particular, 
\textbf{P3} still shows that none of $p_2,\dots,p_k$
may lie inside $\triangle$. The case $k = 3$ 
can again be ruled out, because then $p_3q_3$ would have to
intersect either $p_1q_1$ or $q_1q_2$, and 
Lemma~\ref{lem:resolving} would show that $p_1$ can reach $q_3$.
For $k \geq 4$, we can again show that 
$x_3$ would have to lie inside $\triangle$ (otherwise,
we could obtain bad pair with fewer supporting switches
by either omitting $p_2q_2$ or omitting $p_2q_2$ and 
substituting $p_1q_1$ by $q_1q_2$). Thus, by considering
the segment $\overline{p_4x_3}$, we could again find 
a bad pair with fewer supporting switches.

We now show that
that $\triangle \subset D(p_1) \cup D(q_1)$.
If $x_1 \in D(q_1)$ then $\triangle \subseteq D(q_1)$ and we are done.
Otherwise,
let $D(x_1) \subseteq D(p_1)$ be the disk with center $x_1$ and $q_1$
on its boundary. We claim that $D(x_1)$ contains $\triangle\setminus D(q_1)$.
Let $y$ be the intersection of $C(q_1)$ with $x_1q_2$.
Since $|q_1y| \ge  |q_1q_2|$, $\angle q_1yq_2 \le \pi/2$. Therefore
$\angle q_1yx_1 \ge \pi/2$ and
$|x_1y| < |x_1q_1|$.
This implies that $x_1y$ is contained in $D(x_1)$ and therefore
$\triangle\setminus D(q_1)$ is contained in $D(x_1)$ as required.
\end{proof}

\paragraph*{Putting it together.}
Let $G$ be a transmission graph of a set $P$ of points, given 
by the point set $P$ and the associated radii.
To prove Lemma~\ref{lem:planarization}, we first construct the sparse
subgraph $H$ of $G$ as in Lemma~\ref{lem:pruning} in time $O(n \log n)$.
Then we iteratively resolve the crossings in $H$ to obtain
$\widetilde{H}$.
Since $H$ has $O(n)$ crossings that can be found in $O(n)$ time,
this takes $O(n)$ time.

The graph $H$ is not necessarily a transmission graph. Therefore, we cannot 
directly apply Lemma~\ref{lem:globalreachability} to $H$
and conclude that $\widetilde{H}$ preserves the reachability relation
(between points of $P$) of $H$ and therefore of
$G$. Nevertheless, in the following lemma, we will
prove  that $\widetilde{H}$ and $G$ do have the same reachability 
relation between points of $P$.
\begin{lemma}
Let $G$ be a transmission graph on a set $P$ of points. Let $H$ be a sparse subgraph of $G$ constructed  as in Lemma~\ref{lem:pruning}
and let $\widetilde{H}$ be the planar graph obtained by resolving the crossings in $H$ as described above.
Then for any two points  $p, q \in P$, $p$ can reach $q$ in $\widetilde{H}$ if and only if $p$
can reach $q$ in $G$.
\end{lemma}
\begin{proof}
Let $\widetilde{G}$ be the graph obtained by resolving the crossings
in $G$, as described above.  If $p$ can reach $q$ in $G$, then by 
Lemma~\ref{lem:pruning}, $p$ can reach $q$ in $H$, and by the definition of
the way we resolve  crossings, $p$ can reach $q$ also in $\widetilde{H}$.

Conversely,
if $p$ can reach $q$
in $\widetilde{H}$,
then $p$ can reach $q$ in $\widetilde{G}$, because a subdivision of every edge
of $\widetilde{H}$ is contained in $\widetilde{G}$. Therefore,
by Lemma \ref{lem:globalreachability}, $p$ can reach $q$ in $G$.
\end{proof}

\subsection{Polynomial Dependence on $\Psi$}
\label{sec:psiconst}
We now present a standard reachability oracle whose performance parameters
depend polynomially on the radius ratio $\Psi$.
Together with Theorem~\ref{thm:geometricreachability} we will obtain the
following result:
\begin{theorem}
\label{thm:2doraclefixed}
Let $G$ be the transmission graph for a set $P \subset \R^2$
of $n$ points.
We can construct a geometric reachability oracle for $G$
with $S(n) = O(\Psi^3 n^{3/2})$ and $Q(n) = O(\Psi^3\sqrt{n})$ in
time $O(\Psi^3 n^{3/2})$.
\end{theorem}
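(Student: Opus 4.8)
The plan is to reduce the problem to building a \emph{standard} reachability oracle and then to construct one from a recursive geometric separator. By Theorem~\ref{thm:geometricreachability}, it suffices to produce a standard oracle with space $O(\Psi^5 n^{3/2})$ and query time $O(\Psi^3\sqrt n)$, since the black-box transformation to a geometric oracle only adds $O(n\log\Psi)$ to the space and $O(\log n\log\Psi)$ to the query time, both dominated by the claimed bounds. To keep the construction fast I would first replace $G$ by a $(1+\eps)$-spanner $H$ with $O(n)$ edges via Theorem~\ref{thm:2dspanner}. Because $H\subseteq G$ and $H$ contains a (bounded-stretch) path for every edge of $G$, reachability in $H$ coincides exactly with reachability in $G$, so every graph search below runs on a sparse graph.

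The heart of the construction is a separator decomposition based on the theorem of Alber and Fiala~\cite{AlberFiala04}. Normalising so that $r_{\min}=1$, each disk $D(p)$ has area in $[\pi,\pi\Psi^2]$, so the union of all disks has area $A=O(\Psi^2 n)$. Alber--Fiala produce a separating circle $\gamma$ that is balanced with respect to this area and is crossed by few disks; I take the separator $S$ to be the set of vertices $p$ whose disk $D(p)$ intersects $\gamma$. The key observation, specific to transmission graphs, is that $S$ separates \emph{directed} reachability: if $p\notin S$, then $D(p)$ lies wholly on one side of $\gamma$, so every out-neighbour $q$ of $p$ (which satisfies $|pq|\le r_p$, hence $q\in D(p)$) lies on the same side as $p$. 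Thus no edge of the graph minus $S$ crosses $\gamma$, and every directed path between the inside and the outside must pass through a vertex of $S$.

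Given this separation property, I would assemble the oracle in the standard recursive way. Recursing on the inside and outside parts yields a decomposition tree of depth $O(\log A)=O(\log n+\log\Psi)$. For every vertex $v$ and every separator $S_j$ on the path from its leaf to the root, I store two bit vectors over $S_j$ recording which vertices of $S_j$ are reachable from $v$ and which reach $v$; these are filled by forward and backward searches from the separator vertices on the sparse graph $H$. A query $(s,t)$ is answered top-down: at each level I test reachability through the current separator by intersecting the vector of separator vertices reachable from $s$ with the vector of those that reach $t$, and if this fails while $s,t$ lie in a common child I recurse into it, otherwise answering \textsc{no}. Correctness follows because any $s$--$t$ path leaving a piece must cross that piece's separator, and the cost is controlled by the separator sizes along a single root-to-leaf path.

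The main obstacle is bounding the separator for a \emph{directed transmission} graph. Unlike the planar case, the disks may overlap arbitrarily, so a single curve can be crossed by many disks and there is no a priori $O(\sqrt n)$ bound on $|S|$; this is precisely what the area-based separator controls, at the cost of balancing by area rather than by vertex count. Translating area balance into bounds on the recursion and on the stored reachability is where the dependence on $\Psi$ enters, through $A=O(\Psi^2 n)$. A further point is that an area-$O(1)$ piece may still hold many vertices, but here the normalisation $r_{\min}=1$ helps: such a piece lies in a region of bounded diameter, where radius-$\geq 1$ disks force the vertices into a single strongly connected component that is stored in constant space. Carrying out this charging yields query time $O(\Psi^3\sqrt n)$ and space and preprocessing time $O(\Psi^5 n^{3/2})$, and feeding the resulting standard oracle into Theorem~\ref{thm:geometricreachability} produces the geometric oracle of Theorem~\ref{thm:2doraclefixed}.
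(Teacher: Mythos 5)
Your overall architecture matches the paper's: reduce to a standard oracle via Theorem~\ref{thm:geometricreachability}, use the spanner of Theorem~\ref{thm:2dspanner} to keep the searches cheap, and recurse on the Alber--Fiala separator. But there is a genuine gap at the heart of the construction: you treat the separator as if it had bounded \emph{cardinality}, storing for every vertex two bit vectors ``over $S_j$'' and intersecting such vectors at query time. Theorem~\ref{thm:diskseparator} does not bound the number of disks in $\S$ at all --- it only bounds the \emph{area} $\mu(\S)\le\beta\Psi^2\sqrt{\mu(\CD)}$, and the two are incomparable: arbitrarily many disks of radius at least $1$ can pile up in a region of union area $O(1)$ (e.g.\ near-concentric disks), so $|\S|$ can be $\Theta(n)$ at every level of your recursion. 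Your own remark that ``there is no a priori $O(\sqrt n)$ bound on $|S|$; this is precisely what the area-based separator controls'' is exactly where the argument breaks: the area-based separator does \emph{not} control $|S|$. With $|S_j|$ possibly linear, your per-vertex bit vectors give $\Theta(n^2)$ space and your root-level intersection gives $\Theta(n)$ query time, not the claimed bounds.

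The missing idea is the paper's discretization of the separator region by the unit grid: all vertices inside one cell of $\Q_0$ (diameter $1 \le r_{\min}$) form a clique in $G$, hence are interchangeable for reachability, so one \emph{representative per cell} suffices; and by Lemma~\ref{lem:diskunioncover} the number of grid cells meeting $\bigcup\S$ is $O(\mu(\S)) = O(\Psi^2\sqrt{\mu(\CD)})$. This is how the area bound gets converted into a bound on stored information: each node stores reachability between the $O(\Psi^2\sqrt{\mu})$ separator-cell representatives and the $O(\mu)$ cell representatives of its whole subproblem, giving a per-node table of size $O(\Psi^2\mu^{3/2})$ and, after solving the recursion and substituting $\mu = O(\Psi^2 n)$, the claimed $O(\Psi^5 n^{3/2})$ space and $O(\Psi^3\sqrt n)$ query time (the query walks from the least common ancestor of the two query cells to the root, dominated by the root's representative set). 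Your side remark that an area-$O(1)$ piece automatically forms a single strongly connected component is also not right --- small union area forces disks to overlap, but overlapping disks need not contain each other's centers, which is what a directed edge requires --- yet that step becomes unnecessary once representatives are introduced at every level rather than only at the leaves.
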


Our approach is based on a geometric separator theorem for
planar disks.
Let $\CD$ be the set of disks associated with the points in $P$.
For a subset $\E$ of $\CD$ we write
$\bigcup \E := \bigcup_{D \in \E}D$ and we let
$\mu(\E)$ be the area occupied by $\bigcup \E$.
Alber and Fiala show how to find a separator for $\CD$
with respect to $\mu(\cdot)$ \cite{AlberFiala04}.

\begin{theorem}[Theorem~4.12 in~\cite{AlberFiala04}]
\label{thm:diskseparator}
There exist positive constants $\alpha < 1$ and $\beta$ such
that the following holds:
let $\CD$ be a set of $n$ disks and let  $\Psi$ be the ratio of the largest
and the smallest radius in $\CD$.
Then we can find in  $O(\Psi^2n)$ time
a partition $\A \cup \B \cup \S$ of $\CD$ satisfying
(i) $\bigcup \A \cap \bigcup \B = \emptyset$,
(ii) $\mu(\S) \leq \beta\Psi^2 \sqrt{\mu(\CD)}$ and
(iii) $\mu(\A),\mu(\B) \leq \alpha \mu(\CD)$.
\end{theorem}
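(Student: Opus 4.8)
The plan is to find a \emph{balanced axis-parallel separating line}. I sweep a vertical line across the plane, and symmetrically a horizontal line, and among all lines in both directions that split the area in a balanced way I keep the one meeting the least disk area. Concretely, for a vertical line $\ell_x$ at abscissa $x$ let $\A(x)$ be the disks strictly to its left, $\B(x)$ those strictly to its right, and $\S(x)$ those it meets. Then $\bigcup\A(x)$ and $\bigcup\B(x)$ are separated by $\ell_x$, so property~(i) is immediate, and $\mu(\A(x)),\mu(\B(x))\le\max(A(x),\mu-A(x))$, where $A(x)$ is the area of $\bigcup\CD$ to the left of $\ell_x$ and $\mu:=\mu(\CD)$. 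Since $A$ is continuous and nondecreasing from $0$ to $\mu$, the set $I_V=\{x: A(x)\in[\mu/3,2\mu/3]\}$ is a nonempty interval, and every $x\in I_V$ gives $\mu(\A(x)),\mu(\B(x))\le 2\mu/3$, establishing~(iii) with $\alpha=2/3$. It then remains only to exhibit a balanced line, in the vertical or the horizontal family, crossing disks of total union area at most $\beta\Psi^2\sqrt\mu$.

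The engine is an averaging argument. Writing $g(x)=\mu\bigl(\bigcup\S(x)\bigr)$ for the union area of the disks met by $\ell_x$, I claim that $\int_{\R} g(x)\,dx\le 4\Psi\mu$. Indeed, expanding $g(x)=\int_{\R^2}\mathbf{1}[p\in\bigcup\S(x)]\,dp$ and exchanging the order of integration, the contribution of a fixed point $p$ is the measure of the set of $x$ for which some disk through $p$ meets $\ell_x$; since every disk has radius at most $\Psi$, each such disk forces $x$ into an interval of length $2\Psi$ centred within distance $\Psi$ of the abscissa of $p$, so all these $x$ lie in one interval of length at most $4\Psi$. Integrating over $p\in\bigcup\CD$ gives the bound. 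Consequently $\min_{x\in I_V} g(x)\le |I_V|^{-1}\int_{I_V}g\le 4\Psi\mu/|I_V|$, and the analogous estimate holds for the horizontal family with $|I_H|$. Choosing the direction with the longer balanced interval yields a balanced line crossing union area at most $4\Psi\mu/\max(|I_V|,|I_H|)$.

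The crux, and what I expect to be the main obstacle, is to show that $\max(|I_V|,|I_H|)=\Omega(\sqrt\mu)$, which would convert the previous estimate into a separator of area $O(\Psi\sqrt\mu)$, comfortably inside the claimed $\beta\Psi^2\sqrt\mu$. A single-direction bound fails: using only $|I_V|\ge(\mu/3)/H$ (with $H$ the vertical extent) together with $\mu\le WH$ is too weak, and indeed one direction alone can be bad, e.g.\ two clusters stacked vertically force $|I_V|$ to be tiny while $|I_H|$ is large. One must therefore play the two sweep directions against each other: if both balanced intervals were shorter than $c\sqrt\mu$, then a $\mu/3$-fraction of the area would be squeezed into a narrow vertical slab \emph{and} a $\mu/3$-fraction into a narrow horizontal slab, and a careful accounting should force $\mu=O(c^2\mu)$, a contradiction for small $c$. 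Making this quantitative, pinning down the interaction of the two area profiles and the exact constants $\alpha,\beta$, is the delicate part; a robust route may replace the single global line by a shifting argument over $O(\sqrt\mu)$ candidate positions. I also note that it is essential to measure $\S$ by the area of its \emph{union} rather than the sum of disk areas, since arbitrarily many unit disks may share a common point, so the sum-of-areas is not controlled by $\mu$ whereas the union-area charging above is.

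Finally, for the running time $O(\Psi^2 n)$: after normalising the smallest radius to $1$, I overlay the unit grid $\Q_0$; each disk of radius at most $\Psi$ meets $O(\Psi^2)$ cells and hence interacts only with disks of $O(\Psi^2)$ relevant cells, so the portion of the disk arrangement needed to evaluate $A(x)$ and $g(x)$ and to locate the cheapest balanced line along each sweep has total size $O(\Psi^2 n)$ and can be built and swept within that bound.
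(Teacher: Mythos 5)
The paper itself contains no proof for you to be compared against: this statement is imported verbatim from Alber and Fiala~\cite{AlberFiala04} and used as a black box, so your proposal has to stand on its own. Its skeleton is sound --- classifying disks as strictly left of, strictly right of, or meeting a vertical line gives (i) and (iii) immediately, and your Fubini bound $\int_{\R} g(x)\,dx \le 4\Psi\mu$ is correct. But the step you yourself flag as the crux, $\max(|I_V|,|I_H|)=\Omega(\sqrt{\mu})$, is left unproved, and that is a genuine gap; worse, with your balance window $[\mu/3,2\mu/3]$ the claim is \emph{false}. Take a plus sign built from unit disks ($\Psi=1$): a vertical bar $[-1,1]\times[-W/2,W/2]$ and a horizontal bar $[-W/2,W/2]\times[-1,1]$, so $\mu\approx 4W$. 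For every $x\le -1$ we have $A(x)\le W-2<\mu/3$, and symmetrically $\mu-A(x)<\mu/3$ for every $x\ge 1$, so $I_V\subseteq[-1,1]$; but any vertical line with $|x|\le 1$ meets \emph{every} disk of the vertical bar, so $g(x)\ge 2W=\Theta(\mu)$ on all of $I_V$. Hence $|I_V|,|I_H|=O(1)=o(\sqrt{\mu})$, and moreover \emph{no} $2/3$-balanced axis-parallel line is a cheap separator for this input at all. This also shows why the accounting you gesture at cannot close as stated: a $\mu/3$-fraction in a thin vertical slab and a $\mu/3$-fraction in a thin horizontal slab need not overlap, since $\mu/3+\mu/3<\mu$, and the plus sign realizes exactly this configuration.

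The repair is small and turns the ``delicate part'' into a two-line argument. Run your scheme with balance parameter $\alpha\in(3/4,1)$, say $\alpha=7/8$, i.e.\ $I_V=\{x: A(x)\in[(1-\alpha)\mu,\alpha\mu]\}$; the theorem only asks for \emph{some} constant $\alpha<1$, so this is permitted. By continuity and monotonicity of $A$, the vertical slab over $I_V$ carries exactly $(2\alpha-1)\mu$ of the area of $\bigcup\CD$, and likewise the horizontal slab over $I_H$. Since $2(2\alpha-1)\mu-\mu=(4\alpha-3)\mu>0$, inclusion--exclusion forces at least $(4\alpha-3)\mu$ of that area into the intersection of the two slabs, a rectangle of area $|I_V|\cdot|I_H|$; hence $\max(|I_V|,|I_H|)\ge\sqrt{(4\alpha-3)\mu}$. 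Plugging this into your averaging bound yields a balanced line with $g\le 4\Psi\sqrt{\mu}/\sqrt{4\alpha-3}=O(\Psi\sqrt{\mu})$, which is even stronger than the stated $O(\Psi^2\sqrt{\mu})$ (Alber and Fiala's extra factor of $\Psi$ stems from their different, grid-based construction). With this lemma your existence argument is complete; what remains soft is only the $O(\Psi^2 n)$ construction time, where ``build and sweep the arrangement'' needs more care (one must maintain union areas, not just combinatorial structure, across the sweep events), but that is an algorithmic detail rather than a hole in the separator bound.
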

Since any directed path in $G$ lies completely in $\bigcup \D$, any path from
a vertex of a disk in $\A$ to a vertex of a disk in $\B$ needs to use at least one vertex of a disk in $\S$, see
Figure~\ref{fig:separator}.
(Notice that there may not be a path from a center $p$ of a disk in $\A$ to another center $q$ of a disk in  $\A$ containing only centers of disks in $\A$. It may be that every path from $p$ to $q$ goes through a center corresponding to a disk in $\S$.)
 Since $\mu(\S)$ is small, we can approximate $\bigcup \S$
with a few grid cells. We choose the diameter of the cells small enough such that
all vertices in one cell form a clique and are equivalent in terms of reachability.
We can thus pick one vertex per cell and store the
reachability information for it. Applying this idea
recursively gives a separator tree
that allows us to answer reachability queries efficiently. The details follow.

\begin{figure}[htb]
\centering
\includegraphics[scale=0.9]{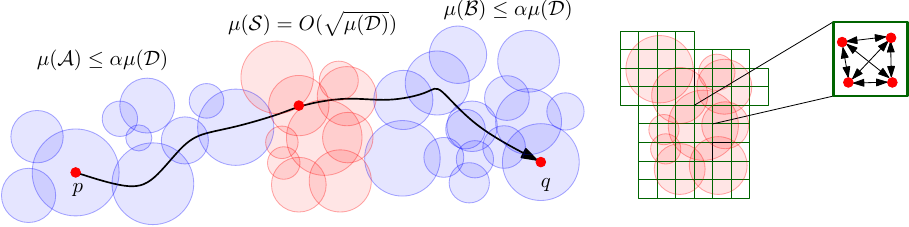}
\caption{Any path from $\A$ to $\B$ needs to use at least
one vertex of $\S$. Since $\mu(\S)$ is small, we can approximate $\bigcup \S$ with
few grid cells.}
\label{fig:separator}
\end{figure}

\paragraph*{Preprocessing Algorithm and Space Requirement.}
For the preprocessing phase, consider the grid $\Q = \Q_0$ whose
cells have diameter $1$.
All vertices in a single cell form a  clique in $G$, so the reachability information of all vertices in a grid cell is the same and it
suffices to compute this information only
for one such vertex. For each non-empty cell
$\sigma \in \Q$, we pick an arbitrary vertex $p_\sigma \in P \cap \sigma$
as the \textit{representative} of $\sigma$.
For a subset $\C\subset \D$ of disks we denote the
 set of  representatives of the non-empty cells containing centers of the disks in $\C$ by $R_\C$.

We recursively create a separator tree $T$
that contains all the required reachability information.
Each node $v$  of $T$ corresponds to an induced subgraph of the transmission graph and
the root corresponds to the entire transmission graph.
We construct the tree top down. Let $G_v$ be the subgraph associated with a node $v$ and let
$\D_v$ be the set of disks of the vertices of $G_v$.
We compute a separator $\S_v$  and subsets $\A_v,\B_v$, satisfying the conditions of Theorem~\ref{thm:diskseparator} for $G_v$.
Let $Q_v$ be all cells in $\Q$ containing centers of disks of
 $\S_v$. Let $R_v$ be the set of representatives of $Q_v$,
and let $\C_v\subset \D_v$ be all disks with centers in $Q_v$ (Note that $\C_v$ contains $\S_v$).
For each $r \in R_v$, we store all the disk centers of
$\D_v$ that $r$ can
reach and all the disk centers of
$\D_v$ that can reach $r$ in $G_v$.
We recursively compute separator trees for  the transmission graphs induce by the centers of $\A_v \setminus \C_v$ and
the centers of
$\B_v \setminus \C_v$. The roots of these trees are children of $v$ in $T$.

To obtain the  required reachability information at a node $v$ of $T$,
we compute a $2$-spanner $H_v$ for the transmission
graph $G_v$, as in Theorem~\ref{thm:2dspanner}.
Since we are only
interested in the reachability properties of the spanner,
$\eps = 1$ (or any constant) suffices.
For each $r \in R_v$,
we compute a BFS tree in $H_v$ with root $r$. Next, we reverse all
edges in $H_v$, and we again compute BFS-trees for all $r \in R_v$
in the transposed graph. This gives the required reachability
information for $v$.

As $T$ has $O(\log n)$ levels,
the total running time for computing the spanners is
$O(n\log n(\log n + \log \Psi))$.
Since the spanners are sparse, the time for computing a single BFS-tree
associated with a node $v$ is $O(|\D_v|)$. It follows that
 the time for computing all
BFS-trees at $v$ is $O(|\D_v|\cdot|R_v|)$ and the time to compute all BFS trees of all nodes  of
the separator tree $T$ is
$O(\sum_{v\in T} |\D_v|\cdot|R_v|)$. To bound this sum,
we need the
following lemma.

\begin{lemma}
\label{lem:diskunioncover}
Let $\E$ be a set of $n$ disks with radius at least $1$. Then
the number of cells in $\Q_0$ that intersect
$\bigcup \E$ is $O(\mu(\E))$.
\end{lemma}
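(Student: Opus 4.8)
The plan is to charge every relevant grid cell to a disk of constant area sitting inside $\bigcup\D$, arranged so that each point of the plane is charged only $O(1)$ times. Let $N$ be the number of cells of $\Q_0$ that intersect $\bigcup\D$, and recall that each cell has diameter $1$ and hence area $1/2$. Fix such a cell $\sigma$. By definition there is a disk $D(p,r_p)\in\D$ with $r_p\ge 1$ and a witness point $x_\sigma\in\sigma\cap D(p,r_p)$. The first and crucial step is to inscribe in $D(p,r_p)$ a disk $B_\sigma=D(c_\sigma,1/2)$ of radius $1/2$ whose center is close to the cell, namely $|c_\sigma x_\sigma|\le 1/2$. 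I would do this by a short case analysis: if $x_\sigma$ lies at distance at least $1/2$ from the boundary circle $C(p,r_p)$, simply take $c_\sigma=x_\sigma$, so that $D(x_\sigma,1/2)\subseteq D(p,r_p)$; otherwise slide $x_\sigma$ towards $p$ to the point $c_\sigma$ on the segment $\overline{p x_\sigma}$ at distance $r_p-1/2$ from $p$. This $c_\sigma$ is well defined and strictly interior precisely because $r_p\ge 1$ forces $r_p-1/2\ge 1/2>0$, and it satisfies $|p c_\sigma|+1/2=r_p$, so $D(c_\sigma,1/2)\subseteq D(p,r_p)$, while $|c_\sigma x_\sigma|=|p x_\sigma|-(r_p-1/2)\le 1/2$ since $|p x_\sigma|\le r_p$. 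In either case we obtain $B_\sigma\subseteq D(p,r_p)\subseteq\bigcup\D$ with $\mathrm{area}(B_\sigma)=\pi/4$ and $|c_\sigma x_\sigma|\le 1/2$.

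The second step is the bounded-overlap estimate for the family $\{B_\sigma\}$. Suppose a point $y$ lies in $B_\sigma$. Then $|y c_\sigma|\le 1/2$ and $|c_\sigma x_\sigma|\le 1/2$, so the witness $x_\sigma\in\sigma$ satisfies $|y x_\sigma|\le 1$; thus $\sigma$ meets the disk $D(y,1)$. Every cell meeting $D(y,1)$ is contained in $D(y,2)$, and these cells are pairwise interior-disjoint, each of area $1/2$, so there are at most $\mathrm{area}(D(y,2))/(1/2)=O(1)$ of them. Hence each $y\in\R^2$ belongs to $B_\sigma$ for only $O(1)$ cells $\sigma$.

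Finally I would combine the two steps by integrating the multiplicity. Since every $B_\sigma$ lies in $\bigcup\D$ and the overlap is $O(1)$,
\[
  N\cdot\frac{\pi}{4}=\sum_{\sigma}\mathrm{area}(B_\sigma)
  =\int_{\bigcup\D}\#\{\sigma: y\in B_\sigma\}\,\mathrm{d}y
  \le O(1)\cdot\mu(\D),
\]
which gives $N=O(\mu(\D))$, as claimed. The main obstacle is the inscribed-disk construction in the first step: it is exactly where the hypothesis $r_p\ge 1$ (relative to the unit cell diameter) is used, since for small disks one could not guarantee a radius-$1/2$ disk inside $D(p,r_p)$ close to $\sigma$, and without it a cell might only clip a thin sliver of a disk and contribute negligibly to the area. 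The overlap count in the second step is a routine packing argument.
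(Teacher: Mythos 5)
Your proof is correct and takes essentially the same approach as the paper: each cell that meets $\bigcup \D$ is charged a constant amount of area of $\bigcup \D$ lying within distance $O(1)$ of that cell, and each point of the plane is charged only $O(1)$ times, which is exactly the paper's argument. The only difference is cosmetic --- you charge an explicitly inscribed disk $B_\sigma$ of radius $1/2$ and integrate multiplicities, whereas the paper charges the area of $\bigcup \D$ inside the $3\times 3$ block of cells around $\sigma$; both hinge on the hypothesis $r_p \geq 1$ in the same way.
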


\begin{proof}
Let $S \subset \Q_0$ be the set of all cells that intersect
$\bigcup \E$. For $\sigma \in S$, the \emph{neighborhood}
of $\sigma$ is defined as the region consisting of
$\sigma$ and its eight surrounding cells.
Let $S' \subseteq S$ be a maximal subset of cells
in $S$ whose neighborhoods are pairwise disjoint.
Then, $|S| = O(|S'|)$.
Now, let $\sigma \in S'$.
Since all disks in $\E$ have radius at
least $1$, there is a disk $D'$ (not necessarily
in $\E$) of radius
exactly $1/(2\sqrt{2})$ such that $D' \subseteq \bigcup \E$
and such that $D'$ intersects the boundary of $\sigma$.
Thus, the intersection of $\bigcup \E$ and the neighborhood
of $\sigma$ contributes at least $\mu(D') = \Omega(1)$ to $\mu(\E)$.
Since the neighborhoods for the cells in $S'$ are pairwise disjoint,
it follows that $|S| = O(|S'|) = O(\mu(\E))$, as claimed.
\end{proof}

Now, by Lemma~\ref{lem:diskunioncover}, we have $|R_v| = O(\mu(\S_v))$.
Thus, if we denote by $L_i$ the nodes of the separator tree at level
$i$ of the recursion, we get that
the sum $\sum_v |\D_v|\cdot|R_v|$ is proportional to
\begin{align*}
\sum_{i\geq 0} \sum_{v \in L_i} |\D_v| \cdot \mu(\S_v)
&\leq \sum_{i \geq 0} \sum_{v \in L_i} |\D_v| \cdot \beta\Psi^2\sqrt{\mu(\D_v)} & \text{(by Theorem~\ref{thm:diskseparator}(ii))}\\
&= \sum_{i \geq 0} \sum_{v \in L_i} |\D_v| \cdot \beta\Psi^2\sqrt{\alpha^i\mu(\D)}& \text{(by Theorem~\ref{thm:diskseparator}(iii))}\\
&= \beta\Psi^2 \sqrt{\mu(\D)} \sum_{i \geq 0}  \alpha^{i/2} \sum_{v \in L_i} |\D_v| \\
&\leq \beta\Psi^2 n \sqrt{\mu(\D)} \sum_{i \geq 0}  \alpha^{i/2} &
\text{(the $\D_v$ at a level are disjoint)} \\
&=  O(\Psi^3 n^{3/2})  & (\mu(\D) = O(\Psi^2n), \alpha < 1) \\
\end{align*}
Thus, the total preprocessing time is
$O(n\log^2 n + n\log \Psi + \Psi^3 n^{3/2}) =
O(\Psi^3 n^{3/2})$.
The space requirement is also bounded by
the preprocessing time.

\paragraph*{Query Algorithm.}
Let $p,q \in P$ be given. We assume that $p$ and $q$ are the representatives of their cells. (Otherwise
we replace either $p$ or $q$ by its representative.)
Let $v$ and $w$ be the nodes in $T$ with
$p \in R_{v}$ and $q \in R_{w}$.
Let $u$ be least common ancestor of $v$ and $w$. We can find $u$ by
 walking up the tree starting from $v$ and $w$ in $O(\log n)$ time.
Let $L$ be the  path from $u$ to the root
of $T$. We check for each $r \in \bigcup_{x\in L} R_{x}$
whether $p$ can reach $r$ and whether $r$ can reach $q$. If so,
we return YES.
If there is no such vertex
$r$ then we return NO.
Since $|R_x|$ increases geometrically along $L$, the running time
is dominated by the time for processing the root, which is
$O(\Psi^2\mu(\D)^{1/2})$.
Bounding $\mu(\D)$ by $O(\Psi^2n)$,  we get that the total query time is
$O(\Psi^3\sqrt{n})$.

It remains to argue that our query algorithm is correct.
By construction, it follows
that we return YES only if there is a path from $p$ to $q$.
Now, suppose there is a path $\pi$ in $G$ from $p$ to $q$,
where $p$ and $q$ are representatives of their grid cells with $p\neq q$.
Let $v, w$ be the nodes in $T$ with
$p \in R_{v}$ and $q \in R_{w}$. Let $u$
be their least common ancestor, and $L$ be the path from $u$
to the root.
By construction, $\bigcup_{x \in L} \S_x$ contains a disk $D(r)$
of a vertex $r$ in $\pi$.
Let $x$ be the node of $L$ closest to the root such that
$\S_x$ contains such a disk, and let $r$ be a vertex on $\pi$ with
$D(r) \in \S_x$.
Let $r'$ be the representative of the cell $\sigma$ containing $r$.
Since the vertices in $\sigma$ constitute a clique,
$p$ can reach $r'$ and $r'$ can reach $q$ in
$G_x$.
Thus, when walking along $L$, the algorithm will discover $r'$ and
the path from $p$ to $q$.
Theorem~\ref{thm:2doraclefixed} now follows.

\subsection{Logarithmic Dependence on $\Psi$}
\label{sec:psipoly}

Finally, we improve the dependence on $\Psi$ to be logarithmic,
at the cost of a slight increase at the exponent of $n$.
We prove the following theorem by constructing a standard reachability oracle and
then using Theorem~\ref{thm:geometricreachability}.
\begin{theorem}
\label{thm:2doraclebounded}
Let $G$ be the transmission graph for a
set $P$ of $n$ points in the plane.
We can construct a geometric reachability oracle for $G$
with $S(n) = O(n^{5/3}\log^{1/3}\Psi \log^{2/3} n)$ and
$Q(n) = O(n^{2/3}\log^{1/3} \Psi \log^{2/3} n)$ that answers all
queries correctly with high probability.
The preprocessing time is
$O(n^{5/3}(\log \Psi + \log n)\log^{1/3} \Psi \log^{2/3}n )$.
\end{theorem}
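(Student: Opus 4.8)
The plan is to build a standard reachability oracle whose parameters depend only logarithmically on $\Psi$, and then invoke Theorem~\ref{thm:geometricreachability} to convert it into a geometric oracle; since that conversion adds only an $O(\log n \log \Psi)$ term to the query time and an $O(n \log \Psi)$ term to the space, it is dominated by the bounds in the theorem statement and does not affect them. The starting point is a \emph{recursive separator} idea similar to Section~\ref{sec:psiconst}, but the heavy $\Psi^2$ factors there come from two places: the area bound $\mu(\D) = O(\Psi^2 n)$, and the $\Psi^2$ in the Alber--Fiala separator. To get rid of the polynomial dependence we cannot afford a single grid whose cells force cliques; instead I would use a \emph{quadtree of logarithmic depth} that groups the radii into $O(\log \Psi)$ scales, so that at each scale the relevant disks have comparable radius and the separator behaves like a unit-disk separator. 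This is exactly the hint given in the introduction (``random sampling combined with a quadtree of logarithmic depth'').

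First I would partition the radii into $O(\log \Psi)$ classes $[2^j, 2^{j+1})$ and associate level-$j$ grids $\Q_j$ as defined in the preliminaries, so that within a class the disks form a bounded-degree intersection structure and constant-diameter cells again induce cliques. Second, I would employ \emph{random sampling} to build a separator-like object of size $O(n^{2/3})$ per level: sample a set of $R$ representatives and, for each sampled vertex, store the set of sample vertices it can reach and that can reach it, computed on a $(1+\eps)$-spanner via Theorem~\ref{thm:2dspanner} (with $\eps$ constant, so only reachability is preserved). The sample size $R$ is chosen to balance the cost of storing the $R \times R$ reachability table against the cost of locally searching inside a region between consecutive sampled vertices; this balance is what produces the $n^{2/3}$ and $n^{5/3}$ exponents rather than the $\sqrt{n}$ and $n^{3/2}$ of the deterministic separator approach. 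Third, a query $(p,q)$ would first try the sampled ``landmarks'': if some landmark $r$ is reached by $p$ and reaches $q$, answer \textsc{yes}; otherwise fall back to a bounded local BFS restricted to the region determined by the quadtree cell and scale of $p$ and $q$. The $\log^{1/3}\Psi$ and $\log^{2/3} n$ factors enter through the $O(\log \Psi)$ scales and the $O(\log n)$ quadtree depth, distributed across the three balanced terms.

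The correctness of the query being ``with high probability'' is exactly where random sampling is essential: the local BFS after skipping the landmarks must terminate within $O(n^{2/3})$ steps, which holds only if \emph{every} path of length exceeding the local budget contains a sampled landmark. A standard hitting-set argument shows that if we sample each representative independently with the right probability, then with probability $1 - n^{-c}$ every directed path of more than $\tilde{O}(n^{1/3})$ hops hits a landmark; this is the lemma I would state and prove first, since it underpins both the query-time bound and the error probability. The main obstacle I expect is \emph{controlling the geometry across scales}: unlike the single-grid argument of Section~\ref{sec:psiconst}, a path can weave between disks of very different radii, so I must argue that hopping through $O(\log \Psi)$ scales does not blow up either the landmark count or the local search region. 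The careful part is showing that the area/cell-counting bound of Lemma~\ref{lem:diskunioncover} can be applied per-scale and summed to yield only a $\log\Psi$ factor, rather than the $\Psi^2$ factor of the area bound $\mu(\D)=O(\Psi^2 n)$ used in the previous subsection. Once the per-scale separator size and the hitting-set guarantee are in place, the space, query-time, and preprocessing recurrences solve to the claimed bounds by the same geometric-series reasoning used for Equation~(\ref{eqn:spacerec}), and Theorem~\ref{thm:geometricreachability} finishes the geometric version.
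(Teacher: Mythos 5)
Your long-path half matches the paper exactly: a random sample of size $O(n^{\alpha}\log n)$ with $n^{\alpha} = n^{2/3}(\log\Psi/\log n)^{1/3}$ hits, with high probability, every path with at least $n^{1-\alpha}$ (roughly $n^{1/3}$) vertices; you store full Boolean reachability arrays for the sampled vertices, computed by BFS in a constant-stretch spanner (Theorem~\ref{thm:2dspanner}); and the conversion to a geometric oracle via Theorem~\ref{thm:geometricreachability} is as in the paper. The genuine gap is in your short-path fallback. You propose a query-time ``bounded local BFS'' and justify its cost by the hitting-set property, but that property bounds only the \emph{hop length} of paths that avoid the sample; it does not bound the number of vertices a BFS explores. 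A BFS from $p$ truncated at depth $\tilde{O}(n^{1/3})$ can still visit $\Theta(n)$ vertices (for instance, $p$ may have linear out-degree), so this step breaks the $O(n^{2/3}\log^{1/3}\Psi\log^{2/3} n)$ query bound. Moreover, you restrict the search to ``the region determined by the quadtree cell and scale of $p$ and $q$,'' but the relevant scale is that of the \emph{largest radius on the path}, which can be far larger than $r_p$ and $r_q$ and is unknown at query time.

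The missing idea --- and it is precisely how the paper resolves the cross-scale ``weaving'' obstacle you flag but leave open --- is to move all graph searching into preprocessing and make the short-path query a mere intersection of precomputed sorted lists. For each scale $i \in \{0,\dots,\lceil \log\Psi\rceil\}$ and each cell $\sigma \in \Q_i$, the set $R_\sigma$ of vertices in $\sigma$ with radius in $[2^i, 2^{i+1})$ is a clique whose disks contain all of $\sigma$; fix one representative $r_\sigma$. For every vertex $p$ and every scale $i$, precompute (by spanner BFS from each representative in the graph and its transpose) two sorted lists: the representatives $r_\sigma$ of cells whose $2^{i+1}n^{1-\alpha}$-neighborhood contains $p$ that $p$ can reach, and those that can reach $p$. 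Each vertex lies in $O(n^{2-2\alpha})$ such neighborhoods per scale, giving space $O(n^{3-2\alpha}\log\Psi)$ and query time $O(n^{2-2\alpha}\log\Psi)$ for intersecting the lists over all scales, which balances against $O(n^{\alpha}\log n)$ exactly as you computed. Correctness then requires no control of weaving across scales: if a $p$--$q$ path $\pi$ has fewer than $n^{1-\alpha}$ vertices and its maximum radius lies in $[2^i,2^{i+1})$ at a vertex in cell $\sigma$, then every edge of $\pi$ has length at most $2^{i+1}$, so $\pi$ stays inside the neighborhood of $\sigma$, and the clique property makes $r_\sigma$ a valid intermediate landmark at that single scale. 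Finally, the paper's construction is flat: there is no recursive separator tree, no Alber--Fiala separator, and no geometric-series recurrence as in (\ref{eqn:spacerec}) in this section, so your appeal to the Section~\ref{sec:psiconst} recursion machinery should be dropped.
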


We scale everything such that the smallest radius in
$P$ is $1$.
Our approach is as follows: let $p,q \in P$.
If there is a $p$-$q$-path  with  ``many'' vertices, we detect
this by taking a large enough
random sample $S \subseteq P$ and by storing the
reachability information for every vertex in $S$.
If there is a path from $p$ to $q$ with ``few'' vertices, then $p$
must be ``close'' to $q$, where ``closeness'' is defined relative to
the largest radius along the path. The radii of the point of $P$ can lie in
$O(\log \Psi)$ different scales, and for each scale we
store local information to find such a ``short'' path.

\paragraph*{Long Paths.}
Let $0 < \alpha < 1$ be a parameter to be
determined later.
First, we show that a random sample can be used to detect paths with
many vertices.

\begin{lemma}\label{lem:sampling}
We can sample a set $S \subset P$ of size $O(n^\alpha \log n)$ such that
the following holds with  probability
at least $1-1/n^2$: For any two points
$p,q \in P$, if there is a path $\pi$ from $p$ to $q$ in $G$ with
at least $n^{1-\alpha}$ vertices, then $\pi \cap S \neq \emptyset$.
\end{lemma}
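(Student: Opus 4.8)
The plan is to use a standard random hitting-set argument. I would sample $S$ by placing each vertex of $P$ into $S$ independently with probability $\rho = c\,n^{\alpha-1}\ln n$ for a suitable constant $c$. Then $\mathbb{E}[|S|] = \rho n = c\,n^{\alpha}\ln n$, and a Chernoff bound shows that $|S| = O(n^{\alpha}\log n)$ with probability at least $1 - n^{-2}$, which already gives the claimed size bound. The real content is the hitting property, so the bulk of the proof is to show that, with high probability, $S$ meets every ``long'' path that the oracle actually needs to certify.

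The elementary estimate behind this is as follows. Fix any set $W \subseteq P$ of $L := \lceil n^{1-\alpha}\rceil$ vertices. Since the vertices are sampled independently,
$$\Pr[\,W \cap S = \emptyset\,] = (1-\rho)^{L} \le e^{-\rho L} = e^{-c\ln n} = n^{-c},$$
where I used $\rho L = c \ln n$. Thus a single block of $L$ vertices is missed only with polynomially small probability, and the constant $c$ can be tuned to make this as small as needed.

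The main obstacle is that one cannot finish by union-bounding over \emph{all} long paths: in a dense transmission graph (for instance when all disks pairwise overlap) there are exponentially many paths on $\ge n^{1-\alpha}$ vertices, and a sample of size $O(n^\alpha \log n)$ provably cannot hit all of them. The standard way around this is to commit to a single canonical witness per reachable pair. Concretely, for every ordered pair $p,q \in P$ with $p$ reaching $q$, I fix one minimum-hop path $\pi_{p,q}$; there are at most $n^2$ such paths. If $\pi_{p,q}$ has at least $L$ vertices, I apply the block estimate to its first $L$ vertices and conclude that $\pi_{p,q}$ avoids $S$ with probability at most $n^{-c}$. A union bound over the at most $n^2$ canonical paths then shows that, with probability at least $1 - n^{2-c} \ge 1 - 1/n$ (taking $c \ge 3$), every canonical path on at least $n^{1-\alpha}$ vertices contains a vertex of $S$.

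Finally, I would connect this back to the statement as it is used. The oracle only needs, for each pair joined by a long path, a single sampled vertex certifying the connection, so it suffices that the canonical path $\pi_{p,q}$ be hit: whenever $(p,q)$ admits no short path, $\pi_{p,q}$ itself has $\ge n^{1-\alpha}$ vertices and is therefore hit, yielding a vertex $s \in S$ with $p$ reaching $s$ and $s$ reaching $q$. Combining the size bound with the hitting bound and adjusting constants gives overall success probability at least $1 - 1/n$, as required. The only delicate point is precisely the reduction from ``all long paths'' to a polynomial family of canonical paths; everything else is the routine Chernoff and union-bound computation sketched above.
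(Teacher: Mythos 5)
Your proposal is correct and follows essentially the same route as the paper's proof: independent sampling with probability $\Theta(n^{\alpha-1}\log n)$, a Chernoff bound for the size of $S$, a miss probability of $n^{-c}$ for a single fixed long path, and a union bound over the $n(n-1)$ ordered pairs. Your explicit reduction to one canonical (minimum-hop) path per pair makes rigorous exactly what the paper does implicitly---its proof also fixes just one path per pair before union-bounding over pairs---so your observation that the literal ``every long path is hit'' reading of the statement is unattainable (yet unnecessary for the oracle) is an accurate sharpening rather than a deviation.
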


\begin{proof}
We take $S$ to be a random subset of size
 $m = 4n^\alpha \ln n$ vertices from $P$.
Now fix $p$ and $q$ and let $\pi$ be a path from $p$ to $q$ with
$k \geq n^{1-\alpha}$ vertices.
The probability that $S$ contains no vertex from $\pi$ is
\begin{align*}
\frac{\binom{n-k}{m}}{\binom{n}{m}} & =\frac{(n-m)(n-m-1)\cdots(n-m-k+1)}{n(n-1)\cdots (n-k+1)}   \\
& = \left( 1-\frac{m}{n} \right) \left( 1-\frac{m}{n-1} \right) \cdots \left( 1-\frac{m}{n-k-1} \right) \le
(1 - m/n)^k \leq e^{-mk/n} \leq 1/n^4,
\end{align*} by our choice of $m$.
Since there are $n(n-1)$ ordered vertex pairs, the union bound
shows that the probability that $S$ fails to detect a pair of
vertices connected by a long path is at most
$n(n-1)/n^4 \leq 1/n^2$.
\end{proof}

We draw a sample $S$ as in Lemma~\ref{lem:sampling}, and for
each $s \in S$, we store two Boolean arrays that indicate for each
$p \in P$ whether $p$ can reach $s$ and whether $s$ can reach
$p$. This requires $O(n^{1+\alpha} \log n)$ space.
It remains to deal with vertices that are connected by a path with
fewer than $n^{1-\alpha}$ vertices.

\paragraph{Short Paths.}
Let $L = \lceil \log \Psi \rceil$. We consider the $L$ grids
$\Q_0,\dots,\Q_L$ (recall that the cells in $\Q_i$ have
diameter $2^i$). For each cell $\sigma \in \Q_i$, let
$R_\sigma \subseteq P$ be the vertices $p \in P \cap \sigma$ with $r_p \in
[2^i,2^{i+1})$.
The set $R_\sigma$ forms a clique in $G$, and for each
$p \in R_\sigma$, the disk $D(p)$ contains the cell
$\sigma$. For every $i = 0, \dots, L$ and for every $\sigma \in \Q_i$
with $R_\sigma \neq \emptyset$,
we fix an arbitrary  \emph{representative point}
$r_\sigma \in R_\sigma$.

The \emph{neighborhood} $N(\sigma)$ of $\sigma\in \Q_i$ is defined
as the set of all cells in $\Q_i$ that have distance at most
$2^{i+1}n^{1-\alpha}$ from $\sigma$. We have
$|N(\sigma)| = O(n^{2-2\alpha})$.
Let $P_\sigma \subseteq P$ be the vertices
that lie in the cells of $N(\sigma)$.

For every vertex $p \in P$, and
for every $i \in \{0, \dots, L\}$ we store two sorted lists of representative of cells
$\sigma \in \Q_i$ such that $p \in P_\sigma$. The first list
contains all  representatives $r_\sigma$, such that  $p \in P_\sigma$ and $p$ can reach  $r_\sigma$.
The second list contains all
 representatives $r_\sigma$,  such that  $p \in P_\sigma$ and  $r_\sigma$  can reach $p$.
A vertex $p$ belongs to $O(n^{2-2\alpha}\log \Psi)$
sets $P_\sigma$, so the total space is
$O(n^{3-2\alpha}\log \Psi)$.

\paragraph*{Performing a Query.}
Let $p, q \in P$ be given. To decide whether $p$ can reach $q$,
we first check the Boolean tables for all $O(n^{\alpha}\log n)$ points
in $S$. If there
is an $s\in S$ such that $p$ reaches
$s$ and $s$ reaches $q$, we return YES. If not,
for $i \in \{0, \dots, L\}$,
we consider the list of representatives
that are reachable from $p$ in their neighborhood at level $i$ and
the list of representatives that can reach $q$ in their
neighborhood at level $i$. We check
whether these lists contain a common element. Since the lists are
sorted, this can be done in time linear in their size.
If we find a common representative for some $i$, we return YES.
Otherwise, we return NO.

We now prove the correctness of the query algorithm. First note that we
return YES, only if there is a path from $p$ to
$q$.  Now suppose that there is a path $\pi$ from $p$ to $q$.
If $\pi$ has at least $n^{1-\alpha}$ vertices,
then by Lemma~\ref{lem:sampling}, the sample $S$ hits $\pi$
with probability at least $1-1/n^2$, and the algorithm returns YES.
If $\pi$ has less than $n^{1-\alpha}$ vertices, let $r$ be the
vertex of $\pi$ with the largest radius,
and let $i$ be such that the radius of $r$ lies in $[2^i, 2^{i+1})$.
Let $\sigma$ be the cell of $\Q_i$ that contains $r$.
Since $\pi$ has at most $n^{1-\alpha}$ vertices, and since each edge
of $\pi$ has length at most $2^{i+1}$, the path $\pi$ lies
entirely in  $P_\sigma$ and in particular
 both $p$ and $q$ are in
$P_\sigma$. Since $r \in R_\sigma$ and since
$R_\sigma$ forms a clique in $G$, the representative point
$r_\sigma$ of $\sigma$ can be reached from $p$ and can reach $q$.
It follows from the definition of the sorted lists of representatives stored with $p$ and $q$, that
 $r_\sigma$ is
contained in the list of representatives  reachable  from $p$ and in the
list of representatives that can reach $q$.
Our query algorithm detects this when it checks
whether the corresponding
lists for $p$ and $q$ at level $i$, have a nonempty intersection.

\paragraph*{Time and Space Requirements.}
We consider first the query time. To test if there is a long path from
$p$ to $q$ we traverse $S$, and for every $s\in S$ we test,
 in $O(1)$ time, whether
$p$ can reach $s$ and whether $s$ can reach $q$.
This takes $O(|S|)= O(n^\alpha \log n)$ time.
To test if there is a short path from $p$ to $q$ we use
the lists of reachable representatives associated with $p$ and $q$ at each of the $O(\log \Psi)$ grids.
At each level
we step through two lists of size $O(n^{2-2\alpha})$. So
in total we spend $O(n^{2-2\alpha}\log\Psi)$ time.
We choose $\alpha$ to balance the times we spend to detect short and long paths.
That is $\alpha$ satisfies
\[
  n^{\alpha} \log n = n^{2-2\alpha}\log\Psi \Leftrightarrow
  n^{\alpha} = n^{2/3} (\log \Psi / \log n)^{1/3}.
\]
This yields $Q(n) = O(n^{2/3}\log^{1/3} \Psi \log^{2/3} n)$.
This choice of $\alpha$ results in a space bound of
$O(n^{5/3}\log^{1/3} \Psi \log^{2/3} n)$.

For the preprocessing algorithm, we first compute the reachability arrays
for each $s \in S$. To do so, we build a 2-spanner $H$ for $G$
as in Theorem~\ref{thm:2dspanner} in  $O(n (\log n + \log \Psi))$ time.
Then,
for each $s \in S$ we perform a BFS search in $H$ and its transposed graph.
This gives all vertices that $s$ can reach and
all vertices that can reach $s$
in $O(n^{5/3}\log^{1/3}\Psi \log^{2/3} n )$ total time.
For the short paths, the preprocessing algorithm goes as follows:
For each $i = 0,\dots, L$ and
for each cell $\sigma \in \Q_i$ that has a representative $r_\sigma$,
we compute a 2-spanner $H_\sigma$ as in Theorem~\ref{thm:2dspanner} for
$P_\sigma$.
For each representative $r_\sigma$, we do a BFS search in $H_\sigma$ and
the transposed graph, each starting from $r_\sigma$. This gives all
$p \in P_\sigma$ that can reach $r_\sigma$
and that are reachable from $r_\sigma$ via a short path.
The running time is dominated by the time for constructing the
spanners. Since each point $p \in P$ is contained in
$O(n^{2-2\alpha}\log \Psi) = O(n^{2/3} \log^{1/3}\Psi \log^{2/3} n)$
different $P_\sigma$, and
since constructing $H_\sigma$ takes
$O(|P_\sigma|(\log \Psi + \log |P_\sigma|))$
time, the bound on the  preprocessing time stated in
Theorem~\ref{thm:2doraclebounded} follows.

\section{Conclusion}
Transmission graphs constitute a natural class of
directed graphs for which non-trivial reachability
oracles can be constructed. As mentioned in the
introduction, it seems to be a very challenging
open problem to obtain similar results for general directed
graphs.  We believe that our results only scratch
the surface of the possibilities offered by transmission graphs,
and several interesting open problems remain.

All our results on 2-dimensional transmission graphs depend on the radius ratio $\Psi$.
Whether this dependency can be avoided is a major open question.
Our most efficient reachability oracle is for $\Psi < \sqrt{3}$. In this case
the reachability relation in a transmission graph with $n$ vertices can be represented
by the reachability relation in a planar graph with $O(n)$ vertices.
However, it is not clear to us that the upper bound of $\sqrt{3}$ in this result is
tight.
Can we obtain a similar construction for, say, $\Psi = 100$? Is there
 a way to represent the reachability relation in \emph{any} transmission graph, regardless of $\Psi$,
by the reachability relation in a planar graph with $o(n^{2})$ vertices?
This would immediately imply a non-trivial reachability oracle for any value
of $\Psi$.

Conversely, it is interesting to see if we can represent
the reachability relation of an arbitrary directed graph
using a transmission graph. If this is possible, the relevant questions
are how many vertices such a transmission graph must have, what is the required
radius ratio, and how fast can we compute it. A representation with not too many
vertices and low radius ratio would lead to efficient reachability oracles for
general directed graphs.

\paragraph*{Acknowledgments.}
We like to thank G\"unter Rote and the anonymous reviewers 
for valuable comments, in particular for pointing out a
drastic simplification for the one-dimensional reachability oracle.

\bibliographystyle{abbrv}
\bibliography{literature}
\end{document}